\documentclass[pra,aps,onecolumn,twoside,superscriptaddress]{revtex4}




\usepackage{amsmath,amsfonts,amssymb,caption,color,epsfig,graphics,graphicx,hyperref,latexsym,mathrsfs,revsymb,theorem,url,verbatim,enumerate,epstopdf,tikz,float,multirow,booktabs,appendix,scalefnt}
\hypersetup{colorlinks,linkcolor={blue},citecolor={red},urlcolor={blue}}
\usetikzlibrary{arrows, decorations.markings}

\tikzstyle{vecArrow} = [thick, decoration={markings,mark=at position
	1 with {\arrow[semithick]{open triangle 60}}},
double distance=1.4pt, shorten >= 5.5pt,
preaction = {decorate},
postaction = {draw,line width=1.4pt, white,shorten >= 4.5pt}]
\tikzstyle{innerWhite} = [semithick, white,line width=1.4pt, shorten >= 4.5pt]

\newtheorem{definition}{Definition}
\newtheorem{proposition}[definition]{Proposition}
\newtheorem{lemma}[definition]{Lemma}

\newtheorem{theorem}[definition]{Theorem}
\newtheorem{corollary}[definition]{Corollary}
\newtheorem{conjecture}[definition]{Conjecture}

\newtheorem{remark}[definition]{Remark}
\newtheorem{example}[definition]{Example}
\newtheorem{question}[definition]{Question}

\def\bcj{\begin{conjecture}}
	\def\ecj{\end{conjecture}}
\def\bcr{\begin{corollary}}
	\def\ecr{\end{corollary}}
\def\bd{\begin{definition}}
	\def\ed{\end{definition}}
\def\bea{\begin{eqnarray}}
\def\eea{\end{eqnarray}}
\def\bem{\begin{enumerate}}
	\def\eem{\end{enumerate}}
\def\bex{\begin{example}}
	\def\eex{\end{example}}
\def\bim{\begin{itemize}}
	\def\eim{\end{itemize}}
\def\bl{\begin{lemma}}
	\def\el{\end{lemma}}
\def\bma{\begin{bmatrix}}
	\def\ema{\end{bmatrix}}
\def\bpf{\begin{proof}}
	\def\epf{\end{proof}}
\def\bpp{\begin{proposition}}
	\def\epp{\end{proposition}}
\def\bqu{\begin{question}}
	\def\equ{\end{question}}
\def\br{\begin{remark}}
	\def\er{\end{remark}}
\def\bt{\begin{theorem}}
	\def\et{\end{theorem}}


\def\squareforqed{\hbox{\rlap{$\sqcap$}$\sqcup$}}
\def\qed{\ifmmode\squareforqed\else{\unskip\nobreak\hfil
		\penalty50\hskip1em\null\nobreak\hfil\squareforqed
		\parfillskip=0pt\finalhyphendemerits=0\endgraf}\fi}
\def\endenv{\ifmmode\;\else{\unskip\nobreak\hfil
		\penalty50\hskip1em\null\nobreak\hfil\;
		\parfillskip=0pt\finalhyphendemerits=0\endgraf}\fi}
\newenvironment{proof}{\noindent \textbf{{Proof.~} }}{\qed}
\def\Dbar{\leavevmode\lower.6ex\hbox to 0pt
	{\hskip-.23ex\accent"16\hss}D}
\makeatletter
\def\url@leostyle{%
	\@ifundefined{selectfont}{\def\UrlFont{\sf}}{\def\UrlFont{\small\ttfamily}}}
\makeatother
\urlstyle{leo}

\def\bcj{\begin{conjecture}}
	\def\ecj{\end{conjecture}}
\def\bcr{\begin{corollary}}
	\def\ecr{\end{corollary}}
\def\bd{\begin{definition}}
	\def\ed{\end{definition}}
\def\bea{\begin{eqnarray}}
\def\eea{\end{eqnarray}}
\def\bem{\begin{enumerate}}
	\def\eem{\end{enumerate}}
\def\bex{\begin{example}}
	\def\eex{\end{example}}
\def\bim{\begin{itemize}}
	\def\eim{\end{itemize}}
\def\bl{\begin{lemma}}
	\def\el{\end{lemma}}
\def\bpf{\begin{proof}}
	\def\epf{\end{proof}}
\def\bpp{\begin{proposition}}
	\def\epp{\end{proposition}}
\def\bqu{\begin{question}}
	\def\equ{\end{question}}
\def\br{\begin{remark}}
	\def\er{\end{remark}}
\def\bt{\begin{theorem}}
	\def\et{\end{theorem}}

\def\btb{\begin{tabular}}
	\def\etb{\end{tabular}}

\newcommand{\nc}{\newcommand}


\nc{\bbA}{\mathbb{A}} \nc{\bbB}{\mathbb{B}} \nc{\bbC}{\mathbb{C}}
\nc{\bbD}{\mathbb{D}} \nc{\bbE}{\mathbb{E}} \nc{\bbF}{\mathbb{F}}
\nc{\bbG}{\mathbb{G}} \nc{\bbH}{\mathbb{H}} \nc{\bbI}{\mathbb{I}}
\nc{\bbJ}{\mathbb{J}} \nc{\bbK}{\mathbb{K}} \nc{\bbL}{\mathbb{L}}
\nc{\bbM}{\mathbb{M}} \nc{\bbN}{\mathbb{N}} \nc{\bbO}{\mathbb{O}}
\nc{\bbP}{\mathbb{P}} \nc{\bbQ}{\mathbb{Q}} \nc{\bbR}{\mathbb{R}}
\nc{\bbS}{\mathbb{S}} \nc{\bbT}{\mathbb{T}} \nc{\bbU}{\mathbb{U}}
\nc{\bbV}{\mathbb{V}} \nc{\bbW}{\mathbb{W}} \nc{\bbX}{\mathbb{X}}
\nc{\bbZ}{\mathbb{Z}}


\nc{\bA}{{\bf A}} \nc{\bB}{{\bf B}} \nc{\bC}{{\bf C}}
\nc{\bD}{{\bf D}} \nc{\bE}{{\bf E}} \nc{\bF}{{\bf F}}
\nc{\bG}{{\bf G}} \nc{\bH}{{\bf H}} \nc{\bI}{{\bf I}}
\nc{\bJ}{{\bf J}} \nc{\bK}{{\bf K}} \nc{\bL}{{\bf L}}
\nc{\bM}{{\bf M}} \nc{\bN}{{\bf N}} \nc{\bO}{{\bf O}}
\nc{\bP}{{\bf P}} \nc{\bQ}{{\bf Q}} \nc{\bR}{{\bf R}}
\nc{\bS}{{\bf S}} \nc{\bT}{{\bf T}} \nc{\bU}{{\bf U}}
\nc{\bV}{{\bf V}} \nc{\bW}{{\bf W}} \nc{\bX}{{\bf X}}
\nc{\bZ}{{\bf Z}} 
 \nc{\bv}{{\bf v}}
\nc{\ba}{{\bf a}} \nc{\be}{{\bf e}} \nc{\bu}{{\bf u}}
\nc{\brr}{{\bf r}}


\nc{\cA}{{\cal A}} \nc{\cB}{{\cal B}} \nc{\cC}{{\cal C}}
\nc{\cD}{{\cal D}} \nc{\cE}{{\cal E}} \nc{\cF}{{\cal F}}
\nc{\cG}{{\cal G}} \nc{\cH}{{\cal H}} \nc{\cI}{{\cal I}}
\nc{\cJ}{{\cal J}} \nc{\cK}{{\cal K}} \nc{\cL}{{\cal L}}
\nc{\cM}{{\cal M}} \nc{\cN}{{\cal N}} \nc{\cO}{{\cal O}}
\nc{\cP}{{\cal P}} \nc{\cQ}{{\cal Q}} \nc{\cR}{{\cal R}}
\nc{\cS}{{\cal S}} \nc{\cT}{{\cal T}} \nc{\cU}{{\cal U}}
\nc{\cV}{{\cal V}} \nc{\cW}{{\cal W}} \nc{\cX}{{\cal X}}
\nc{\cZ}{{\cal Z}}


\nc{\hA}{{\hat{A}}} \nc{\hB}{{\hat{B}}} \nc{\hC}{{\hat{C}}}
\nc{\hD}{{\hat{D}}} \nc{\hE}{{\hat{E}}} \nc{\hF}{{\hat{F}}}
\nc{\hG}{{\hat{G}}} \nc{\hH}{{\hat{H}}} \nc{\hI}{{\hat{I}}}
\nc{\hJ}{{\hat{J}}} \nc{\hK}{{\hat{K}}} \nc{\hL}{{\hat{L}}}
\nc{\hM}{{\hat{M}}} \nc{\hN}{{\hat{N}}} \nc{\hO}{{\hat{O}}}
\nc{\hP}{{\hat{P}}} \nc{\hR}{{\hat{R}}} \nc{\hS}{{\hat{S}}}
\nc{\hT}{{\hat{T}}} \nc{\hU}{{\hat{U}}} \nc{\hV}{{\hat{V}}}
\nc{\hW}{{\hat{W}}} \nc{\hX}{{\hat{X}}} \nc{\hZ}{{\hat{Z}}}

\nc{\hn}{{\hat{n}}}






























\def\rank{\mathop{\rm rank}}



\def\tr{\mathop{\rm Tr}}



\newcommand{\bra}[1]{\langle#1|}
\newcommand{\ket}[1]{|#1\rangle}

\newcommand{\ketbra}[2]{|#1\rangle\!\langle#2|}
\newcommand{\braket}[2]{\langle#1|#2\rangle}

\newcommand{\fl}[2]{\lfloor\frac{#1}{#2}\rfloor}



















\def\Dbar{\leavevmode\lower.6ex\hbox to 0pt
	{\hskip-.23ex\accent"16\hss}D}

\begin{document}

\title{Strong quantum nonlocality for unextendible product bases in heterogeneous systems}


	\author{Fei Shi}
\email[]{shifei@mail.ustc.edu.cn}
\affiliation{School of Cyber Security,
	University of Science and Technology of China, Hefei, 230026, People's Republic of China}

\author{Mao-Sheng Li}
\email[]{li.maosheng.math@gmail.com}
\affiliation{Department of Physics, Southern University of Science and Technology, Shenzhen 518055, China}
\affiliation{Department of Physics, University of Science and Technology of China, Hefei 230026, China}

\author{Lin Chen}
\email[]{linchen@buaa.edu.cn}
\affiliation{LMIB(Beihang University), Ministry of Education, and School of Mathematical Sciences, Beihang University, Beijing 100191, China}
\affiliation{International Research Institute for Multidisciplinary Science, Beihang University, Beijing 100191, China
}

\author{Xiande Zhang}
\email[]{Corresponding author: drzhangx@ustc.edu.cn}
\affiliation{School of Mathematical Sciences,
	University of Science and Technology of China, Hefei, 230026, People's Republic of China}

\begin{abstract}
	A set of multipartite orthogonal product states is strongly nonlocal if it is locally irreducible in every bipartition, which shows the phenomenon of strong quantum nonlocality without entanglement. It is known that unextendible product bases (UPBs) can show the phenomenon of quantum nonlocality without entanglement. Thus it is interesting to investigate the strong quantum nonlocality for UPBs. Most of the UPBs with the minimum size cannot demonstrate strong quantum nonlocality. In this paper, we construct a series of UPBs with different large sizes in
	$d_A\otimes d_B\otimes d_C$ and  $d_A\otimes d_B\otimes d_C\otimes d_D$ for $d_A, d_B, d_C, d_D\geq 3$, and we also show that these UPBs have strong quantum nonlocality, which  answers an open question given by Halder \emph{et al.} [\href{https://journals.aps.org/prl/abstract/10.1103/PhysRevLett.122.040403}{Phys. Rev. Lett. \textbf{122}, 040403 (2019)}] and Yuan \emph{et al.} [\href{https://journals.aps.org/pra/abstract/10.1103/PhysRevA.102.042228}{Phys. Rev. A \textbf{102}, 042228 (2020)}] for any possible three and four-partite systems.  Furthermore, we  propose an  entanglement-assisted protocol to locally discriminate  the UPB in $3\otimes 3\otimes 4$, and it consumes less entanglement resource than the teleportation-based protocol. Our results build the connection between strong quantum nonlocality and UPBs.
\end{abstract}

    	\maketitle

\vspace{-0.5cm}
~~~~~~~~~~\indent{\textbf{Keywords}}: strong quantum nonlocality, unextendible product bases, entanglement-assisted discrimination

\section{Introduction}\label{sec:int}

 It is known that a set of nonorthogonal states cannot be perfectly distinguished, and a set of mutually orthogonal states can be always distinguished by performing a global measurement \cite{computation2010}.  However, when the composite quantum system is distributed among several spatially separated parties, it is not always possible to distinguish the states by performing local operations and classical communications (LOCC).  These states are said to be locally indistinguishable. The local indistinguishability plays an important role in quantum data hiding \cite{terhal2001hiding,divincenzo2002quantum,eggeling2002hiding,Matthews2009Distinguishability} and quantum secret sharing \cite{Markham2008Graph,Hillery,Rahaman}. Bennett \emph{et al.} first showed the phenomenon of quantum nonlocality without entanglement by constructing a locally indistinguishable orthogonal product basis in a two qutrit system \cite{bennett1999quantum}.   Consequently, the nonlocality here (or the \emph{local distinguishability based nonlocality} that we call) is very different from the most
well-known form of quantum nonlocality also known as Bell
nonlocality \cite{horodecki2009quantum,brunner2014bell} which can only arise  from entangled states.  This leads us to explore this kind of nonlocality.  After that,  locally indistinguishable
 sets have been widely investigated \cite{1,2,3,4,5,6,7,8,9,10,11,12,13,14,15,16,17,18,li1}.  Unextendible product bases (UPBs) are special kinds of locally indistinguishable sets  \cite{bennett1999unextendible,de2004distinguishability}. A  UPB for a multipartite quantum system is an incomplete orthogonal product basis whose complementary subspace contains no product state.  The UPBs can be used to construct positive-partial-transpose (PPT) entangled states and Peres sets \cite{bennett1999unextendible,halder2019family}. It is also connected to bound entangled states,  fermionic systems, Bell inequalities without quantum violation \cite{bennett1999unextendible,1,Tura2012four, Chen2014Unextendible, Augusiak2012tight,augusiak2011bell}. Most of the constructions for UPBs are about  UPBs with the minimum size \cite{bennett1999quantum,alon2001unextendible,1,Fen06,Joh13,Chen2013The}. It is interesting to investigate  UPBs  with large sizes in multipartite systems.

 Recently, Halder \emph{et al.}  proposed a strong form of nonlocality based on the concept of local irreducibility of  quantum states \cite{Halder2019Strong}. An orthogonal product set (OPS) is locally irreducible means that it is not possible to eliminate
 one or more states from the set by orthogonality-preserving local measurements.  Under this definition, a locally irreducible set must be a locally indistinguishable set,  while the converse is not true in general. Therefore, constructing locally irreducible set of orthogonal quantum states is an efficient way to show the local  distinguishability based  nonlocality.  An  OPS is said to be \emph{strongly nonlocal}  if it is locally irreducible in every
 bipartition.  For further study this kind of nonlocality,  it is interesting to investigate the locally irreducibility and the strong quantum nonlocality for OPSs.  Halder  \emph{et al.}  constructed two strongly nonlocal OPSs in $3\otimes 3\otimes 3$ and $4\otimes 4\otimes 4$ respectively, which shows the phenomenon of strong quantum nonlocality without entanglement \cite{Halder2019Strong}.     In addition, there were several results about strongly nonlocal OPSs. A strongly nonlocal OPS in $d\otimes d\otimes d$, $d\otimes d\otimes (d+1)$, $3\otimes 3\otimes 3\otimes 3$ and $4\otimes 4\otimes 4\otimes 4$  for $d\geq 3$ was given in \cite{yuan2020strong}. The authors in \cite{shi20211}  constructed a strongly nonlocal OPS in $d_A\otimes d_B\otimes d_C$,  $d_A\otimes d_B\otimes d_C\otimes d_D$, and  $d_A\otimes d_B\otimes d_C\otimes d_D\otimes d_E$ for $d_A,d_B,d_C,d_D,d_E\geq 3$.  Both of Refs. \cite{Halder2019Strong,yuan2020strong} propose an open question: whether one can find strongly nonlocal UPBs? Specially, Ref. \cite{yuan2020strong} indicates that most of the previous UPBs with the minimum size cannot be used for building strongly nonlocal UPBs. In \cite{shi2021}, the authors  show that
  a strongly nonlocal UPB with large size in $d\otimes d\otimes d$ exists for $d\geq 3$, where the UPB was constructed from \cite{Agrawal2019Genuinely}.  However, we still do not know whether there exist strongly nonlocal UPBs in multipartite systems with non-equal local dimensions.   So it is interesting to consider strongly nonlocal UPBs in general systems, like any possible  three and four-partite systems. Further, some strongly nonlocal orthogonal entangled sets were shown in \cite{2020Strong,li2}

When a set of orthogonal states is not locally distinguishable, entanglement can be used as a resource for distinguishability of such states. This is called the entanglement-assisted discrimination, which was first proposed by Cohen. In \cite{cohen2008understanding}, Cohen showed that the tile UPB in $3\otimes 3$ can be perfectly distinguished by using a two-qubit maximally entangled state. Since then, entanglement-assisted discrimination has attracted a lot of interest \cite{ghosh2001distinguishability,cohen2008understanding,bandyopadhyay2016entanglement,zhang2016entanglement,gungor2016entanglement,zhang2018local,Sumit2019Genuinely,zhang2020locally,Shi2020Unextendible,2020Strong}. Since a strongly nonlocal UPB is locally indistinguishable in any bipartition, a perfect discrimination of this set needs a resource state that must be entangled in all bipartitions. In case of the teleportation-based protocol \cite{Bennett1993Teleporting}, any set of orthogonal states in $m \otimes n$ ($m\leq n$) can be perfectly distinguished  by using an $m\otimes m$ maximally entangled state. Then the teleportation-based protocol can perfectly distinguish  the strongly nonlocal UPB in $3 \otimes 3\otimes 4$ by using $3\otimes 3$ maximally entangled states shared between any two pairs. Since entanglement is a costly resource in quantum information, it is important to find a protocol using cheaper resources.

In this paper, we focus on the construction of strongly nonlocal UPBs in any possible  three and four-partite systems. The construction of strongly nonlocal UPBs is more difficult than the construction of strongly nonlocal OPSs, since it is not easy to check that an OPS is a UPB usually. Thus, new method is required. By using the relation between OPSs and grid representations, we successfully construct a series of UPBs with different large sizes in three and four-partite systems. That is, for any $0\leq s\leq \fl{d_A-3}{2}$, there exists a UPB of size $d_Ad_Bd_C-8(s+1)$ in $d_A\otimes d_B\otimes d_C$  and a UPB of size $d_Ad_Bd_Cd_D-16(s+1)$ in $d_A\otimes d_B\otimes d_C\otimes d_D$ for $3\leq d_A\leq d_B\leq d_C\leq d_D$. We also show that these UPBs are strongly nonlocal by using the techniques from \cite{shi2021}. Further, we propose an entanglement-assisted protocol for local discrimination of the strongly nonlocal UPB in $3\otimes 3\otimes 4$, which consumes less entanglement resource than the teleportation-based protocol.

  The rest of this paper is organized as follows. In Sec.~\ref{sec:pre}, we introduce preliminary knowledge. In Sec.~\ref{sec:UPBd_Ad_Bd_C}, we construct tripartite UPBs and show that these UPBs are strongly nonlocal. In Theorem 3 of supplementary material \cite{Supplementary}, we show a series of strongly nonlocal UPBs in four-partite systems. In Sec.~\ref{sec:discrim}, we consider the entanglement-assisted discrimination for the strongly nonlocal UPB in $3\otimes  3\otimes 4$. Finally, we conclude in Sec.~\ref{sec:con}.

\section{Preliminary}\label{sec:pre}
In this section, we introduce the preliminary knowledge
and facts. Throughout this paper, we do not normalize states
and operators for simplicity,  and we consider only pure states and
positive operator-valued measure (POVM) measurements.
 For any positive integer $n\geq 1$, we denote $\bbZ_{n}$ as the set $\{0,1,\cdots,n-1\}$. Assume that $\{\ket{i}\}_{i\in\bbZ_m}$ is the computational basis of an $m$-dimensional Hilbert space. A bipartite state $\ket{\psi}$ in $m\otimes n$ can be expressed by
\begin{equation}
\ket{\psi}=\sum_{i\in\bbZ_m}\sum_{j\in\bbZ_n}a_{i,j}\ket{i}_A\ket{j}_B.
\end{equation}
Then $\ket{\psi}$ corresponds to an $m\times n$ matrix,
\begin{equation}
M=(a_{i,j})_{i\in\bbZ_m,j\in\bbZ_n}.
\end{equation}
Note that $\ket{\psi}$ is a product state if and only if $\rank(M)=1$.
Assume that $\ket{\psi_i}$ in $m\otimes n$ corresponds to an $m\times n$ matrix $M_i$ for $i=1,2$, then the inner product
\begin{equation}
\braket{\psi_1}{\psi_2}=\text{Tr}(M_1^{\dagger}M_2).
\end{equation}

An \emph{unextendible product basis} (UPB) for a multipartite quantum system is an incomplete orthogonal product basis whose complementary subspace contains no product state.  For example, the SHIFTS UPB in $2\otimes 2\otimes 2$ is as follows \cite{bennett1999unextendible},
\begin{equation*}
\begin{aligned}
\ket{\psi_0}&=\ket{0}_A\ket{1}_B\ket{+}_C,\quad
\ket{\psi_1}=\ket{1}_A\ket{+}_B\ket{0}_C,\\
\ket{\psi_2}&=\ket{+}_A\ket{0}_B\ket{1}_C,\quad
\ket{\psi_3}=\ket{-}_A\ket{-}_B\ket{-}_C,\\
\end{aligned}
\end{equation*}
where $\ket{\pm}=(\ket{0}\pm\ket{1})/\sqrt{2}$.

A local measurement performed to distinguish a set of multipartite mutually orthogonal states is called
an \emph{orthogonality-preserving local measurement}, if the
postmeasurement states keep being mutually orthogonal.
Specially, a measurement is \emph{trivial} if all the POVM elements are proportional to the identity operator.
In \cite{Halder2019Strong}, the authors proposed the concept of strong quantum nonlocality. An orthogonal product set (OPS) is said to be \emph{strongly nonlocal}  if it is locally irreducible in every
bipartition. Note that an OPS is
a \emph{locally irreducible set} means that  it is not possible to eliminate
one or more states from the set by orthogonality-preserving
local measurements.  There exists an efficient way to check whether an OPS is strongly nonlocal \cite{shi2021}. Assume that an OPS $\{\ket{\psi}\}\subset \otimes_{i=1}^n\cH_{A_i}$. Let $B_1=\{A_2A_3\ldots A_n\}$, $B_2=\{A_3\ldots A_nA_1\}, B_3=\{A_4\ldots A_nA_1A_2\}, \ldots, B_n=\{A_1\ldots A_{n-1}\}$. If the party $B_i$  can only perform a trivial orthogonality-preserving POVM for any $1\leq i\leq n$,  then the OPS  $\{\ket{\psi}\}$ is strongly nonlocal.

 In this paper, we show a series of  strongly nonlocal UPBs  in $d_A\otimes d_B\otimes d_C$ and $d_A\otimes d_B\otimes d_C\otimes d_D$ for any $d_A,d_B,d_C,d_D\geq 3$ respectively.  Without loss of generality, we always assume that $3\leq d_A\leq d_B\leq d_C\leq d_D$. In \cite{shi20211}, the authors gave a decomposition for the outermost layer of 3,4-dimensional hypercubes, and our construction of UPBs in this paper is inspired by this decomposition. Since any OPS in $2\otimes n$ is locally reducible \cite{1,halder2019family},  a  strongly nonlocal UPB in $d_A\otimes d_B \otimes d_C$ and  $d_A\otimes d_B\otimes d_C\otimes d_D$ must satisfy $d_A,d_B,d_C,d_D\geq 3$. Our construction achieves the minimum quantum system necessary for the existence of such a UPB.

Next, we introduce two basic lemmas from \cite{shi2021}, which are useful for showing strong quantum nonlocality. Let $\cH_n$ be an $n$-dimensional Hilbert space. Assume that the computational basis of $\cH_n$ is $\{|0\rangle,|1\rangle,\cdots, |n-1\rangle\}$. For any operator $E$ on $\cH_n$, we denote the matrix $E$ as the matrix representation of the operator $E$ under the computational basis. In general, we do not distinguish the operator $E$ and the  matrix $E$. Given any $n\times n$ matrix $E:=\sum_{i=0}^{n-1}\sum_{j=0}^{n-1} a_{i,j}|i\rangle\langle j|$, for $\mathcal{S},\mathcal{T}\subseteq \{|0\rangle,|1\rangle,\cdots, |n-1\rangle\}$, we define
\begin{equation*}
{}_\mathcal{S}E_{\mathcal{T}}:=\sum_{|s\rangle \in \mathcal{S}}\sum_{|t\rangle \in \mathcal{T}}a_{s,t} |s\rangle\langle t|.
\end{equation*}
It means that ${}_\mathcal{S}E_{\mathcal{T}}$ is a submatrix of $E$ with  row coordinates $\mathcal{S}$ and column coordinates $\mathcal{T}$.  In the case $\cS=\cT$, we denote $E_{\cS}:={}_{\cS}E_{\cS}$ for simplicity.  Given a set $\cS\subseteq\{|0\rangle,|1\rangle,\cdots, |n-1\rangle\}$, an orthogonal set $\{\ket{\psi_i}\}_{i\in\bbZ_s}$ is \emph{spanned} by $\cS$, if for any $i\in\bbZ_s$,  $\ket{\psi_i}$ is a linear combination of the states from $\cS$.

\begin{lemma}[Block Zeros Lemma \cite{shi2021}]\label{lem:zero}
	Let  an  $n\times n$ matrix $E=(a_{i,j})_{i,j\in\bbZ_n}$ be the matrix representation of an operator  $E=M^{\dagger}M$  under the basis  $\cB=\{\ket{0},\ket{1},\ldots,\ket{n-1}\}$. Given two nonempty disjoint subsets $\cS$ and $\cT$ of $\cB$, assume  that  $\{\ket{\psi_i}\}_{i\in\bbZ_s}$, $\{\ket{\phi_j}\}_{j\in\bbZ_t}$ are two orthogonal sets  spanned by $\cS$ and $\cT$ respectively, where $s=|\cS|,$ and $t=|\cT|.$  If  $\langle \psi_i| E| \phi_j\rangle =0$
	for any $i\in \mathbb{Z}_s$ and $j\in\mathbb{Z}_t$, then   ${}_\mathcal{S}E_{\mathcal{T}}=\mathbf{0}$  and  ${}_\mathcal{T}E_{\mathcal{S}}=\mathbf{0}$.
\end{lemma}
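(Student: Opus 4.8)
The plan is to lift the submatrix statement to an identity of operators on $\cH_n$ and then use that the given orthogonal sets are in fact orthogonal bases of the coordinate subspaces.

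First I would introduce the coordinate projectors $P_\cS=\sum_{\ket{s}\in\cS}\proj{s}$ and $P_\cT=\sum_{\ket{t}\in\cT}\proj{t}$. Directly from the definition of ${}_\cS E_\cT$ one checks the operator identities ${}_\cS E_\cT = P_\cS E P_\cT$ and ${}_\cT E_\cS = P_\cT E P_\cS$, so it suffices to prove $P_\cS E P_\cT=\mathbf 0$ and $P_\cT E P_\cS=\mathbf 0$.

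Next comes the one step that needs a word of justification: since $\{\ket{\psi_i}\}_{i\in\bbZ_s}$ is an orthogonal set — its members are nonzero and pairwise orthogonal, hence linearly independent — and it consists of exactly $s=|\cS|$ vectors all lying in the $s$-dimensional space $\lin\,\cS$, it is an orthogonal basis of $\lin\,\cS$. Therefore $P_\cS=\sum_{i\in\bbZ_s}\proj{\psi_i}/\braket{\psi_i}{\psi_i}$, and likewise $P_\cT=\sum_{j\in\bbZ_t}\proj{\phi_j}/\braket{\phi_j}{\phi_j}$. Substituting these spectral-type expansions gives
\[
P_\cS E P_\cT=\sum_{i\in\bbZ_s}\sum_{j\in\bbZ_t}\frac{\ket{\psi_i}\,\langle\psi_i|E|\phi_j\rangle\,\bra{\phi_j}}{\braket{\psi_i}{\psi_i}\,\braket{\phi_j}{\phi_j}}=\mathbf 0,
\]
since every scalar $\langle\psi_i|E|\phi_j\rangle$ vanishes by hypothesis; hence ${}_\cS E_\cT=\mathbf 0$. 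For the transposed block I would invoke that $E=M^\dagger M$ is Hermitian, so ${}_\cT E_\cS=P_\cT E P_\cS=(P_\cS E^\dagger P_\cT)^\dagger=(P_\cS E P_\cT)^\dagger=\mathbf 0$; equivalently, $\langle\phi_j|E|\psi_i\rangle=\overline{\langle\psi_i|E|\phi_j\rangle}=0$, so the same computation with the roles of $\cS$ and $\cT$ interchanged applies.

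I do not expect any genuine obstacle here: the content is the bookkeeping of the first paragraph together with the observation that an orthogonal set of cardinality $|\cS|$ sitting inside $\lin\,\cS$ is forced to be a basis of it (this is exactly where the hypotheses $s=|\cS|$, $t=|\cT|$ are used). It is worth remarking that positivity of $E$ is never needed — only $E^\dagger=E$ — so the lemma in fact holds for an arbitrary Hermitian $E$.
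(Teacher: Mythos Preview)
Your argument is correct. The paper, however, does not supply its own proof of this lemma: it is quoted verbatim from \cite{shi2021} and used as a black box throughout, so there is no in-paper proof to compare against. Your write-up would serve perfectly well as the missing justification --- the key point (an orthogonal family of cardinality $|\cS|$ inside $\lin\,\cS$ is automatically a basis, so the projectors $P_\cS,P_\cT$ can be expanded in the $\ket{\psi_i}$'s and $\ket{\phi_j}$'s) is exactly what makes the hypotheses $s=|\cS|$, $t=|\cT|$ do work, and the appeal to $E^\dagger=E$ for the second block is the natural way to finish. Your closing remark that only Hermiticity (not positivity) is used is also accurate.
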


\begin{lemma}[Block Trivial  Lemma \cite{shi2021}]\label{lem:trivial}
	Let  an  $n\times n$ matrix $E=(a_{i,j})_{i,j\in\bbZ_n}$ be the matrix representation of an operator  $E=M^{\dagger}M$  under the basis  $\cB=\{\ket{0},\ket{1},\ldots,\ket{n-1}\}$. Given a nonempty  subset $\cS:=\{\ket{u_0},\ket{u_1},\ldots,\ket{u_{s-1}}\}$  of $\cB$, let $\{\ket{\psi_j} \}_{j\in\bbZ_s}$ be an orthogonal  set spanned by $\cS$.     Assume that $\langle \psi_i|E |\psi_j\rangle=0$ for any $i\neq j\in \mathbb{Z}_s$.  If there exists a state $|u_t\rangle \in\cS$,  such that $ {}_{\{|u_t\rangle\}}E_{\cS\setminus \{|u_t\rangle\}}=\mathbf{0}$  and $\langle u_t|\psi_j\rangle \neq 0$  for any $j\in \mathbb{Z}_s$, then  $E_{\cS}\propto \mathbb{I}_{\cS}.$
\end{lemma}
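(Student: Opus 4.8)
The plan is to reduce everything to the compression of $E$ onto the subspace $\cV:=\lin(\cS)$, whose matrix in the computational basis $\cS$ is precisely $E_{\cS}$. Since $E=M^{\dagger}M$ is Hermitian, so is $E_{\cS}$; and since $\{\ket{\psi_j}\}_{j\in\bbZ_s}$ consists of $s$ nonzero (each is nonzero already by the hypothesis $\langle u_t|\psi_j\rangle\neq0$), pairwise orthogonal vectors all lying in the $s$-dimensional space $\cV$, it is an orthogonal basis of $\cV$. The three hypotheses will then be combined to show that $E_{\cS}$ has a single eigenvalue on $\cV$, which is exactly the assertion $E_{\cS}\propto\mathbb{I}_{\cS}$.

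First I would note that the condition $\langle\psi_i|E|\psi_j\rangle=0$ for $i\neq j$ refers only to $E_{\cS}$, because $\ket{\psi_i},\ket{\psi_j}\in\cV$. Expanding $E_{\cS}\ket{\psi_j}=\sum_{i\in\bbZ_s}d_{ij}\ket{\psi_i}$ in the orthogonal basis and taking the inner product with $\ket{\psi_k}$ gives $d_{kj}=\langle\psi_k|E_{\cS}|\psi_j\rangle/\braket{\psi_k}{\psi_k}=0$ whenever $k\neq j$, so $E_{\cS}\ket{\psi_j}=\lambda_j\ket{\psi_j}$ for scalars $\lambda_j$ (real, since $E_{\cS}$ is Hermitian). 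Thus $\{\ket{\psi_j}\}_{j\in\bbZ_s}$ is an eigenbasis of $E_{\cS}$, and the conclusion $E_{\cS}\propto\mathbb{I}_{\cS}$ is equivalent to $\lambda_0=\lambda_1=\cdots=\lambda_{s-1}$.

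Next I would bring in the distinguished state $\ket{u_t}$. Writing $E_{\cS}\ket{u_t}=\sum_{\ket{u_k}\in\cS}\langle u_k|E|u_t\rangle\ket{u_k}$ and using $E=E^{\dagger}$ to turn the hypothesis ${}_{\{\ket{u_t}\}}E_{\cS\setminus\{\ket{u_t}\}}=\mathbf{0}$ into $\langle u_k|E|u_t\rangle=\overline{\langle u_t|E|u_k\rangle}=0$ for all $\ket{u_k}\in\cS\setminus\{\ket{u_t}\}$, only the diagonal term survives, so $E_{\cS}\ket{u_t}=a\ket{u_t}$ with $a:=\langle u_t|E|u_t\rangle$. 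Then I would expand $\ket{u_t}$ in the eigenbasis, $\ket{u_t}=\sum_{j\in\bbZ_s}c_j\ket{\psi_j}$ with $c_j=\braket{\psi_j}{u_t}/\braket{\psi_j}{\psi_j}$, and observe that $c_j\neq0$ for every $j$ by the hypothesis $\langle u_t|\psi_j\rangle\neq0$. Applying $E_{\cS}$ in two ways,
\begin{equation*}
\sum_{j\in\bbZ_s} a\,c_j\ket{\psi_j}=a\ket{u_t}=E_{\cS}\ket{u_t}=\sum_{j\in\bbZ_s}\lambda_j c_j\ket{\psi_j},
\end{equation*}
and comparing coefficients (the $\ket{\psi_j}$ being linearly independent) forces $\lambda_j=a$ for all $j$. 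Hence $E_{\cS}=a\sum_{j\in\bbZ_s}\ket{\psi_j}\bra{\psi_j}/\braket{\psi_j}{\psi_j}=a\,\mathbb{I}_{\cS}$, as desired.

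There is no serious obstacle here; the argument is a short spectral computation. The only points that need care are that the hypotheses constrain only the compressed block $E_{\cS}$ (so one never needs $E\ket{\psi_j}$ to lie in $\cV$), the use of Hermiticity to pass from the vanishing row ${}_{\{\ket{u_t}\}}E_{\cS\setminus\{\ket{u_t}\}}$ to the corresponding vanishing column, and the bookkeeping of the normalization factors $1/\braket{\psi_j}{\psi_j}$ in the resolution of the identity on $\cV$, which are harmless because each $\ket{\psi_j}\neq0$.
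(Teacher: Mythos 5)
Your proof is correct: compressing $E$ to $\lin(\cS)$, the orthogonality conditions make the $s$ nonzero pairwise-orthogonal states $\{\ket{\psi_j}\}_{j\in\bbZ_s}$ an eigenbasis of $E_{\cS}$, the vanishing row at $\ket{u_t}$ together with Hermiticity of $E=M^{\dagger}M$ makes $\ket{u_t}$ itself an eigenvector, and the nonzero overlaps $\langle u_t|\psi_j\rangle\neq 0$ force all eigenvalues to coincide, giving $E_{\cS}\propto\mathbb{I}_{\cS}$. The present paper only quotes this lemma from Ref.~\cite{shi2021} without reproducing its proof, and your spectral argument is essentially the standard one behind it, with the unnormalized-state bookkeeping and the row-to-column step via Hermiticity handled correctly.
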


\section{Strongly nonlocal tripartite UPBs}\label{sec:UPBd_Ad_Bd_C}

In this section, we construct strongly nonlocal tripartite UPBs.  In Example~\ref{example:334}, we show a UPB in $3\otimes 3\otimes 4$. Then we generalize this UPB to the space $d_A\otimes d_B\otimes d_C$ in Proposition~\ref{pro:3d1d2}, and prove its strong quantum nonlocality in Proposition~\ref{pro:dAdBdCofthe}.  In fact, we show a series of strongly nonlocal UPBs of different sizes in $d_A\otimes d_B\otimes d_C$ in Theorem~\ref{thm:d1d2d3}.  Let $w_n=e^{\frac{2\pi\sqrt{-1}}{n}}$ be the $n$-th unit root, and let $\text{sum}(M)$ be the sum of all entries of the matrix $M$.

First, we consider a simple example in $3\otimes 3\otimes 4$.  Let
	\begin{equation}\label{OPB334}
\begin{aligned}
\cA_1&:=\{\ket{\psi_1(i,k)}=\ket{\xi_i}_A\ket{0}_B\ket{\eta_k}_C:\ (i,k)\in\bbZ_2\times \bbZ_3 \setminus \{(0,0)\} \},\\
\cA_2&:=\{\ket{\psi_2(i,j)}=\ket{\xi_i}_A\ket{\eta_j}_B\ket{3}_C:\ (i,j) \in\bbZ_2\times \bbZ_2 \setminus \{(0,0)\} \},\\			
\cA_3&:=\{\ket{\psi_3(j,k)}=\ket{2}_A\ket{\xi_j}_B\ket{\eta_k}_C:\ (j,k)\in\bbZ_2\times \bbZ_3 \setminus \{(0,0)\} \},\\
\cA_4&:=\{\ket{\psi_4}= |2\rangle_A|2\rangle_B|3\rangle_C\},\\
\cB_1&:=\{\ket{\phi_1(i,k)}=\ket{\eta_i}_A\ket{2}_B\ket{\xi_k}_C:\ (i,k)\in\bbZ_2\times \bbZ_3 \setminus \{(0,0)\}\},\\
\cB_2&:=\{\ket{\phi_2(i,j)}=\ket{\eta_i}_A\ket{\xi_j}_B\ket{0}_C:\ (i,j)\in\bbZ_2\times \bbZ_2 \setminus \{(0,0)\}\},\\
\cB_3&:=\{\ket{\phi_3(j,k)}=\ket{0}_A\ket{\eta_j}_B\ket{\xi_k}_C:\ (j,k)\in\bbZ_2\times \bbZ_3\setminus \{(0,0)\} \}, \\
\cB_4&:= \{\ket{\phi_4}= |0\rangle_A|0\rangle_B|0\rangle_C\},\\
\cF&:=\{\ket{\varphi(k)}=\ket{1}_A\ket{1}_B\ket{\beta_k}_C:\  k \in\bbZ_2 \setminus \{0\}\},
\end{aligned}
\end{equation}
\begin{equation*}
\ket{S}:=\left(\sum_{i=0}^{2}\ket{i}\right)_A\left(\sum_{j=0}^{2}\ket{j}\right)_B\left(\sum_{k=0}^{3}\ket{k}\right)_C,			
\end{equation*}
where $\ket{\eta_s}_X=\ket{0}_X+(-1)^s\ket{1}_X$, $\ket{\xi_s}_X=\ket{1}_X+(-1)^s\ket{2}_X$ for $s\in \bbZ_2$, $X\in\{A,B\}$, $\ket{\eta_s}_C=\sum_{t=0}^2w_3^{st}\ket{t}_C$,  $\ket{\xi_s}_C=\sum_{t=0}^2w_3^{st}\ket{t+1}_C$ for $s\in \bbZ_3$ and $\ket{\beta_s}_C=\ket{1}_C+(-1)^s\ket{2}_C$ for $s\in \bbZ_2$.

The state $\ket{S}$ is called a ``stopper" state. It is easy to see that $\cup_{i=1}^3 (\cA_i\cup \cB_i) \cup \cF \cup \{\ket{S}\}$ is an OPS, and $\cup_{i=1}^3 (\cA_i\cup \cB_i) \cup \cF \cup_{i=1}^3\{\ket{\psi_i(0,0)},\ket{\phi_i(0,0)}\}\cup \{\ket{\psi_4}\}\cup\{\ket{\phi_4}\}\cup \{\ket{\varphi(0)}\}$ is a complete orthogonal basis in $\bbC^3\otimes \bbC^3\otimes \bbC^4$.  The nine subsets $\cA_i,\cB_i$ $(i=1,2,3,4)$ and $\cF$ in $A|BC$ bipartition correspond to the nine blocks of $3\times 12$ grid in Fig.~\ref{fig:334}. For example, $\cA_1$ corresponds to the $2\times 3$ grid $\{(1,2)\times (00,01,02)\}$. Moreover,  $\cA_i$ is symmetrical to $\cB_i$ for $1\leq i\leq 4$. If we delete $\cA_4,\cB_4$, and add the ``stopper" state $\ket{S}$, we can obtain a UPB in $3\otimes 3\otimes 4$.

\begin{figure}[H]
	\centering
	\includegraphics[scale=0.8]{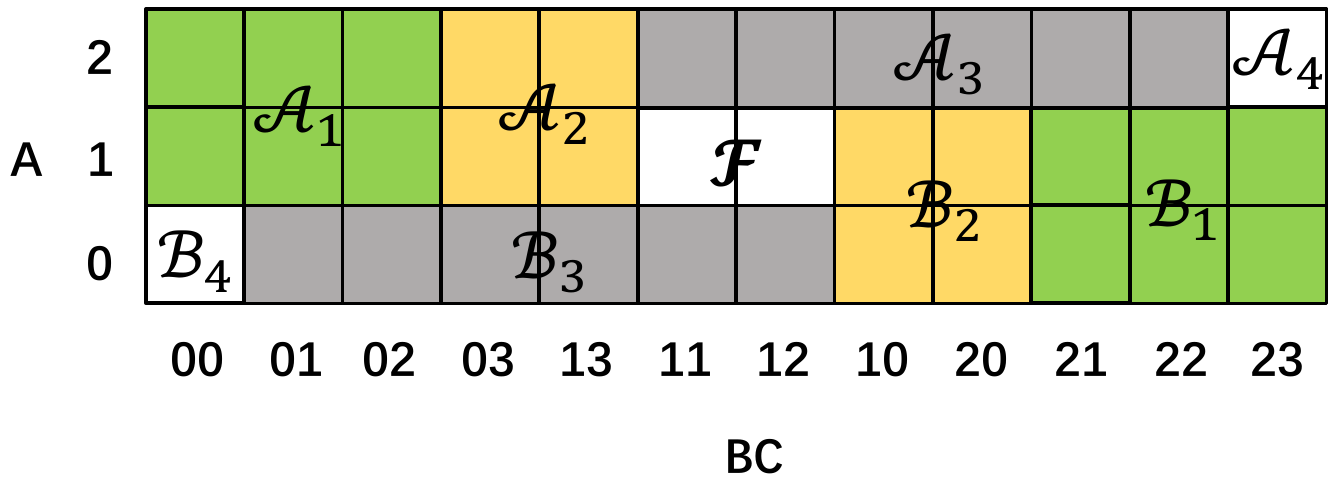}
	\caption{The corresponding $3\times 12$ grid of $\cA_i,\cB_i$ $(i=1,2,3,4)$ and $\cF$ (Eq.~\eqref{OPB334}) in $A|BC$ bipartition.  For example, $\cA_1$ corresponds to the $2\times 3$ grid $\{(1,2)\times (00,01,02)\}$. Moreover,  $\cA_i$ is symmetrical to $\cB_i$ for $1\leq i\leq 4$.  }\label{fig:334}
\end{figure}

\begin{example}\label{example:334}
	In $3\otimes 3\otimes 4$, the  set of states $\cup_{i=1}^3 (\cA_i\cup \cB_i) \cup \cF \cup \{\ket{S}\}$ given by Eq.~\eqref{OPB334} is a UPB of size $28$.
\end{example}
\begin{proof}
	Let $\cH$ be the space spanned by the states in $\cup_{i=1}^3 (\cA_i\cup \cB_i) \cup \cF \cup \{\ket{S}\}$.  For any state $\ket{\psi}\in \cH^{\bot}$, we only need to show that $\ket{\psi}$ must be an entangled state.  We prove it by contradiction. Assume there exists a product state $\ket{\psi}\in \cH^{\bot}$. Let $\cH_1$ be the space spanned by the states in $\cup_{i=1}^3 (\cA_i\cup \cB_i) \cup \cF$. Since $\cH_1\subset \cH$, we have $\cH^{\bot}\subset\cH_1^{\bot}$. Moreover,
	\begin{equation*}
	\cH_1^{\bot}=\text{span}\{\ket{\psi_1(0,0)},\ket{\psi_2(0,0)},\ket{\psi_3(0,0)},\ket{\psi_4},\ket{\phi_1(0,0)},\ket{\phi_2(0,0)},\ket{\phi_3(0,0)},\ket{\phi_4},\ket{\varphi(0)}\}.
	\end{equation*}
	Then $\ket{\psi}$ can be expressed by
	\begin{equation*}
	\ket{\psi}=a_0\ket{\psi_1(0,0)}+b_0\ket{\psi_2(0,0)}+c_0\ket{\psi_3(0,0)}+d_0\ket{\psi_4}+a_1\ket{\phi_1(0,0)}+b_1\ket{\phi_2(0,0)}+c_1\ket{\phi_3(0,0)}+d_1\ket{\phi_4}+e\ket{\varphi(0)},
	\end{equation*}
    where $a_0,b_0,c_0,d_0,a_1,b_1,c_1,d_1,e\in \bbC$. By assumption, $\ket{\psi}$ is a product state, and $\braket{S}{\psi}=0$.

     Next, we consider the matrix form of $\ket{\psi}$ in $A|BC$ bipartition. It corresponds to the $3\times 12$ matrix $M$ in $A|BC$ bipartition, where
    \begin{figure}[H]
    	\centering
    	\includegraphics[scale=0.5]{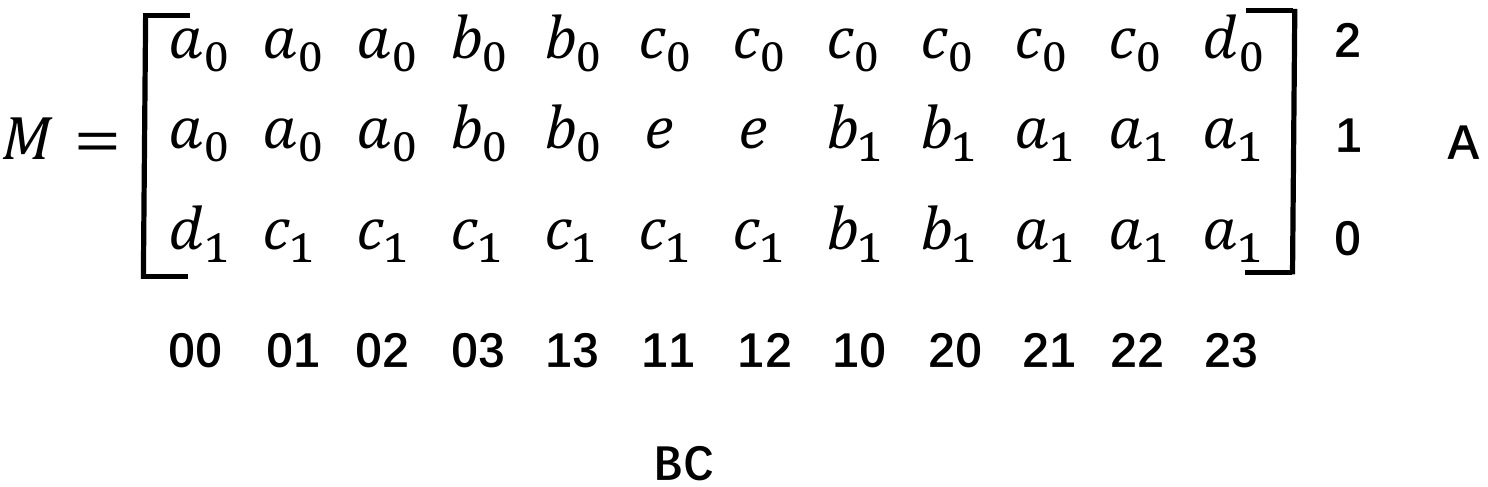}
    \end{figure}
  \noindent and
   \begin{equation}
   \rank(M)=1.
   \end{equation}
   Note that $M$ has a similar structure as Fig.~\ref{fig:334}. It means that $x_0$ is symmetrical to  $x_1$ for $x\in\{a,b,c,d\}$.
   The state $\ket{S}$ corresponds to the matrix $S$ in $A|BC$ bipartition, where
\begin{equation}\label{eq:334_S}
S=\left[\begin{array}{cccccccccccc}
1	&1 &1 &1	&1 &1 &1 &1	 &1 &1 &1 &1\\
1	&1 &1 &1	&1 &1 &1 &1	 &1 &1 &1 &1\\
1	&1 &1 &1	&1 &1 &1 &1	 &1 &1 &1 &1\\
\end{array}\right], \quad \tr(S^{\dagger}M)=\text{sum}(M)=0.
\end{equation}

     Every element of $M$ has coordinate $(A,BC)$. For example, $d_0$ has coordinate $(2,23)$.  If we consider $AB|C$ bipartition,  then we can rearrange the first row of $M$ to a $3\times 4$ matrix, denoted by $M_2$  through $(AB,C)$ coordinates. For example, $d_0$ has coordinate $(22,3)$ in $M_2$. Similarly,  we can rearrange the last row of $M$ to a $3\times 4$ matrix, denoted by $M_0$ through $(AB,C)$ coordinates. Thus we obtain $M_2$ and $M_0$, where
    \begin{figure}[H]
	\centering
	\includegraphics[scale=0.5]{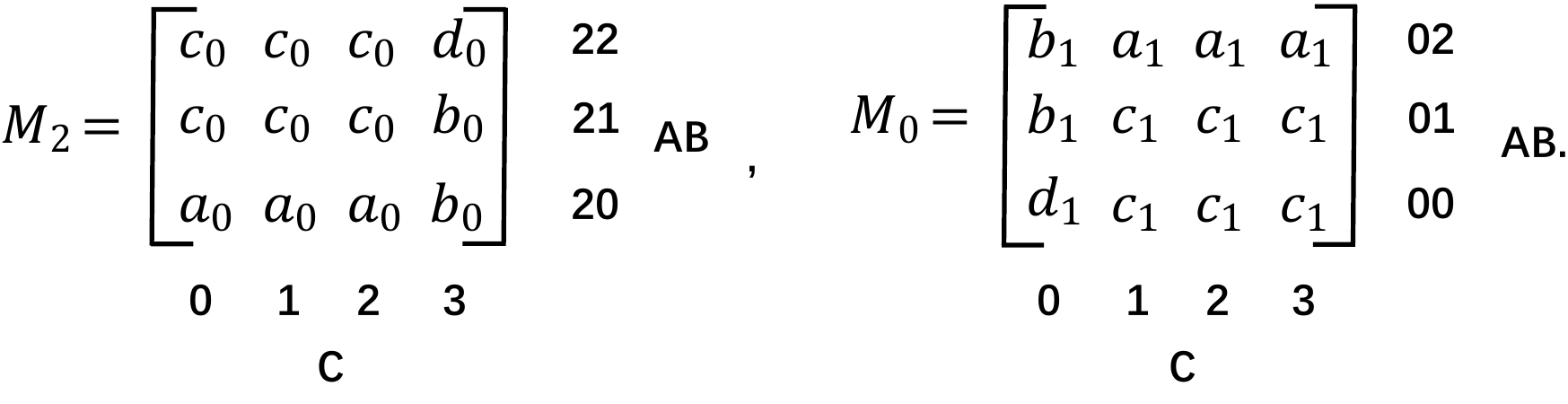}	
\end{figure}

 \noindent
Since  $\ket{\psi}$ is seperable in $AB|C$ bipartition, we must have  \begin{equation}\label{eq:334M_2}
\rank{M_2}=0 \quad \text{or} \quad 1,
\end{equation}
\begin{equation}\label{eq:334M_0}
\rank{M_0}=0 \quad \text{or} \quad 1.
\end{equation}

Assume $a_0\neq 0$. Since $\rank(M)=1$, we have $c_1=d_1$, and $c_0=e=b_1=a_1=d_0$.
\begin{enumerate}[(i)]
	\item If $c_0=0$, then $b_0=0$ by Eq.~\eqref{eq:334M_2}, and $c_1=0$ by Eq.~\eqref{eq:334M_0}, which is impossible by Eq.~\eqref{eq:334_S}.
	\item If $c_0\neq 0$, then $c_0=a_0=b_0$  by Eq.~\eqref{eq:334M_2}, and $c_0=c_1$ by Eq.~\eqref{eq:334M_0}, which is impossible by Eq.~\eqref{eq:334_S}.
\end{enumerate}
Thus we have $a_0=0$. By the symmetry of $M$, we must have $a_1=0$.

Assume $b_0\neq 0$. Since $\rank(M)=1$, we have $c_1=d_1=0$, and $c_0=e=b_1=a_1=d_0=0$. This is impossible by Eq.~\eqref{eq:334_S}. So we must have $b_0=b_1=0$.

Assume $d_1\neq 0$.  Since $\rank(M)=1$, we have $e=c_0=d_0=0$. By Eq.~\eqref{eq:334M_0}, we obtain $c_1=0$.  This is impossible by Eq.~\eqref{eq:334_S}. So we must have $d_0=d_1=0$.

Assume $c_1\neq 0$.  Since $\rank(M)=1$, we have $e=c_0=0$.  This is impossible by Eq.~\eqref{eq:334_S}. So we  have $c_0=c_1=0$.

Since $\text{sum}(M)=0$,  we must have $e=0$, which contradicts $\rank(M)=1$.

Thus $\ket{\psi}$ must be an entangled state, and the set of states $\cup_{i=1}^3 (\cA_i\cup \cB_i) \cup \cF \cup \{\ket{S}\}$ is a UPB.
\end{proof}
\vspace{0.4cm}

Next, we generalize the above UPB to the space $d_A\otimes d_B\otimes d_C$. Let
	\begin{equation}\label{eq:UPB_d_1d_2d_3}
\begin{aligned}
\cA_1&:=\{\ket{\xi_i}_A\ket{0}_B\ket{\eta_k}_C:\ (i,k) \in\bbZ_{d_A-1}\times \bbZ_{d_C-1}\setminus \{(0,0)\}\},\\
\cA_2&:=\{\ket{\xi_i}_A\ket{\eta_j}_B\ket{d_C-1}_C:\ (i,j) \in\bbZ_{d_A-1}\times \bbZ_{d_B-1} \setminus \{(0,0)\}  \},\\			
\cA_3&:=\{\ket{d_A-1}_A\ket{\xi_j}_B\ket{\eta_k}_C:\ (j,k) \in\bbZ_{d_B-1}\times \bbZ_{d_C-1} \setminus \{(0,0)\}\},\\
\cA_4&:=\{|d_A-1\rangle_A|d_B-1\rangle_B|d_C-1\rangle_C \},\\
\cB_1&:=\{\ket{\eta_i}_A\ket{d_B-1}_B\ket{\xi_k}_C:\ (i,k) \in\bbZ_{d_A-1}\times \bbZ_{d_C-1} \setminus \{(0,0)\}\},\\
\cB_2&:=\{\ket{\eta_i}_A\ket{\xi_j}_B\ket{0}_C:\ (i,j) \in\bbZ_{d_A-1}\times \bbZ_{d_B-1} \setminus \{(0,0)\} \},\\
\cB_3&:=\{\ket{0}_A\ket{\eta_j}_B\ket{\xi_k}_C:\ (j,k) \in\bbZ_{d_B-1}\times \bbZ_{d_C-1} \setminus \{(0,0)\}\}, \\
\cB_4&:= \{|0\rangle_A|0\rangle_B|0\rangle_C\},\\
\cF&:=\{\ket{\beta_i}_A\ket{\beta_j}_B\ket{\beta_k}_C:\  (i,j,k)\in\bbZ_{d_A-2}\times\bbZ_{d_B-2}\times \bbZ_{d_C-2}\setminus \{(0,0,0)\}\},\\
\ket{S}&:=\left(\sum_{i=0}^{d_A-1}\ket{i}\right)_A\left(\sum_{j=0}^{d_B-1}\ket{j}\right)_B\left(\sum_{k=0}^{d_C-1}\ket{k}\right)_C,	
\end{aligned}
\end{equation}
where $\ket{\eta_s}_X=\sum_{t=0}^{d_X-2}w_{d_X-1}^{st}\ket{t}_X$, and $\ket{\xi_s}_X=\sum_{t=0}^{d_X-2}w_{d_X-1}^{st}\ket{t+1}_X$ for $s\in \bbZ_{d_X-1 }$, and $X\in\{A,B,C\}$, $\ket{\beta_s}_X=\sum_{t=0}^{d_X-3}w_{d_X-2}^{st}\ket{t+1}_X$ for $s\in \bbZ_{d_X-2}$, and $X\in\{A,B,C\}$.

Note that $\{\ket{\eta_s}_X\}_{s\in\bbZ_{d_X-1}}$, $\{\ket{\xi_s}_X\}_{s\in\bbZ_{d_X-1}}$, and $\{\ket{\beta_s}_X\}_{s\in\bbZ_{d_X-2}}$ are three orthogonal sets, $X\in\{A,B,C\}$,  which are spanned by $\{\ket{t}_X\}_{t=0}^{d_X-2}$,  $\{\ket{t}_X\}_{t=1}^{d_X-1}$, and  $\{\ket{t}_X\}_{t=1}^{d_X-2}$, respectively. This extends the definitions of states in $3\otimes3\otimes4$ in Eq.~\eqref{OPB334}.  The nine subsets $\cA_i,\cB_i$ $(i=1,2,3,4)$ and $\cF$  in $A|BC$ bipartition correspond to the nine blocks of $d_A\times d_Bd_C$ grid in Fig.~\ref{fig:d_Ad_Bd_C}.
Then we have the following proposition.

\begin{figure}[H]
	\centering
	\includegraphics[scale=0.8]{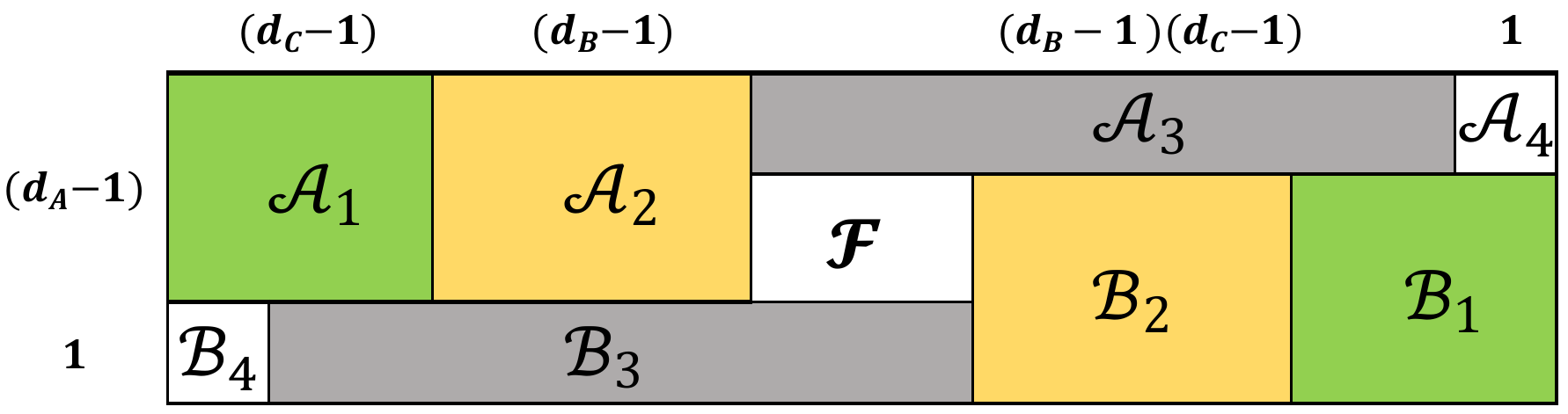}
	\caption{The corresponding $d_A\times d_Bd_C$ grid of $\cA_i,\cB_i$ $(i=1,2,3,4)$ and $\cF$ (Eq.\eqref{eq:UPB_d_1d_2d_3})   in $A|BC$ bipartition.   Moreover,  $\cA_i$ is symmetrical to $\cB_i$ for $1\leq i\leq 4$.  }\label{fig:d_Ad_Bd_C}
\end{figure}

\begin{proposition}\label{pro:3d1d2}
	In $d_A\otimes d_B\otimes d_C$, $3\leq d_A\leq d_B\leq d_C$,  the set of states  $\cup_{i=1}^3(\cA_i\cup\cB_i)\cup \cF \cup \{\ket{S}\}$ given by Eq.~\eqref{eq:UPB_d_1d_2d_3} is a UPB of size $d_Ad_Bd_C-8$.
\end{proposition}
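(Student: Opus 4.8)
The plan is to generalize Example~\ref{example:334} essentially verbatim. Write $\cO:=\cup_{i=1}^3(\cA_i\cup\cB_i)\cup\cF\cup\{\ket{S}\}$ and $\cH:=\text{span}(\cO)$. The proof splits into three parts: (a) $\cO$ is an orthogonal product set of size $d_Ad_Bd_C-8$; (b) the complement $\cH^{\bot}$ is an explicit eight-dimensional space; (c) $\cH^{\bot}$ contains no product state, i.e.\ $\cO$ is unextendible.

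For (a): each listed vector is manifestly a tensor product; orthogonality within a block $\cA_i,\cB_i,\cF$ follows because $\{\ket{\eta_s}_X\}$, $\{\ket{\xi_s}_X\}$, $\{\ket{\beta_s}_X\}$ are discrete-Fourier orthogonal families on $\{\ket{t}_X\}_{t=0}^{d_X-2}$, $\{\ket{t}_X\}_{t=1}^{d_X-1}$, $\{\ket{t}_X\}_{t=1}^{d_X-2}$ respectively, together with the deletion of the $(0,0)$ (resp.\ $(0,0,0)$) indices; orthogonality between two distinct blocks is read off from the tensor factor in which their fixed computational-basis vectors disagree, exactly as displayed by the non-overlapping blocks of Fig.~\ref{fig:d_Ad_Bd_C}; and $\ket{S}$ is orthogonal to every element of $\cA_i\cup\cB_i\cup\cF$ because, by the deletion convention, each carries some $\ket{\eta_s},\ket{\xi_s}$ or $\ket{\beta_s}$ with $s\neq0$ in some factor, while $\ket{S}$ restricted to that factor is the all-ones vector. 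The size sums to $2\big[(d_A-1)(d_C-1)+(d_A-1)(d_B-1)+(d_B-1)(d_C-1)-3\big]+\big[(d_A-2)(d_B-2)(d_C-2)-1\big]+1=d_Ad_Bd_C-8$. For (b): the same bookkeeping shows that adjoining to $\cup_{i=1}^3(\cA_i\cup\cB_i)\cup\cF$ the nine states $\ket{\psi_i(0,0)},\ket{\phi_i(0,0)}$ ($i=1,2,3$), $\ket{\psi_4}:=\ket{d_A-1}_A\ket{d_B-1}_B\ket{d_C-1}_C$, $\ket{\phi_4}:=\ket{0}_A\ket{0}_B\ket{0}_C$ and $\ket{\varphi(0)}:=\ket{\beta_0}_A\ket{\beta_0}_B\ket{\beta_0}_C$ yields an orthogonal set of $(d_Ad_Bd_C-9)+9=d_Ad_Bd_C$ vectors, hence an orthogonal basis of $\bbC^{d_A}\otimes\bbC^{d_B}\otimes\bbC^{d_C}$; so $\cH_1^{\bot}:=(\text{span}(\cup_{i=1}^3(\cA_i\cup\cB_i)\cup\cF))^{\bot}$ is the nine-dimensional span of those nine states, and $\cH^{\bot}=\{\ket{\psi}\in\cH_1^{\bot}:\braket{S}{\psi}=0\}$ is eight-dimensional.

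For (c): suppose for contradiction that a product state $\ket{\psi}=a_0\ket{\psi_1(0,0)}+b_0\ket{\psi_2(0,0)}+c_0\ket{\psi_3(0,0)}+d_0\ket{\psi_4}+a_1\ket{\phi_1(0,0)}+b_1\ket{\phi_2(0,0)}+c_1\ket{\phi_3(0,0)}+d_1\ket{\phi_4}+e\ket{\varphi(0)}$ lies in $\cH^{\bot}$. I would write its matrix $M$ in the $A|BC$ cut: along $A$ its rows fall into three types — $\ket{0}_A$ (carrying $a_1,b_1,c_1,d_1$), the bulk rows $\ket{1}_A,\dots,\ket{d_A-2}_A$ (mutually equal, carrying $a_0,b_0,a_1,b_1,e$), and $\ket{d_A-1}_A$ (carrying $a_0,b_0,c_0,d_0$) — and along $BC$ the columns group into the nine blocks of Fig.~\ref{fig:d_Ad_Bd_C}, on each of which $M$ is constant because $\ket{\eta_0},\ket{\xi_0},\ket{\beta_0}$ are all-ones on their supports. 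Then $\rank(M)=1$ forces on the nine coefficients precisely the proportionality relations found in the $3\otimes3\otimes4$ case; $AB|C$-separability forces the $d_B\times d_C$ rearrangements $M_0$ and $M_2$ of the $\ket{0}_A$- and $\ket{d_A-1}_A$-rows of $M$ to have rank at most $1$; and $\text{sum}(M)=\braket{S}{\psi}=0$. Feeding these into the case analysis of Example~\ref{example:334} verbatim — assume $a_0\neq0$, derive a contradiction, so $a_0=0$ and (by the symmetry between $\cA_i$ and $\cB_i$) $a_1=0$; then successively $b_0=b_1=0$, $d_0=d_1=0$, $c_0=c_1=0$; finally $\text{sum}(M)=0$ forces $e=0$ — yields $\ket{\psi}=0$, contradicting $\rank(M)=1$. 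So $\cH^{\bot}$ has no product state and $\cO$ is a UPB.

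The hard part is the bookkeeping in (c): one must verify carefully, for arbitrary $3\le d_A\le d_B\le d_C$, that $M$ genuinely collapses to the $3\times 12$ block pattern of the example — that every bulk row and bulk column contributes only a proportional copy of a single entry, so that the $A|BC$ and $AB|C$ rank conditions produce exactly the same constraints as in $3\otimes3\otimes4$, no more and no fewer — and to dispatch degenerate index ranges (e.g.\ $\bbZ_{d_A-2}=\{0\}$ when $d_A=3$, so $\cF$ may be empty while $\ket{\varphi(0)}$ still occurs, and similarly for $d_B,d_C$). The orthogonality and dimension computations of parts (a)–(b) and the transplanted finite case analysis are then routine.
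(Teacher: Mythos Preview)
Your proposal is correct and follows essentially the same route as the paper's proof: the paper also reduces to Example~\ref{example:334} by writing the general $d_A\times d_Bd_C$ matrix $M$ (your three row-types are exactly the rows of the paper's Eq.~\eqref{eq:M}), extracting the $d_B\times d_C$ rearrangements $M_0$ and $M_{(d_A-1)}$ of the $\ket{0}_A$- and $\ket{d_A-1}_A$-rows with rank $\le 1$, and then invoking the case analysis of the example verbatim. Your parts (a)--(b) make explicit the orthogonality and dimension bookkeeping that the paper leaves implicit.
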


The proof of Proposition~\ref{pro:3d1d2} is given in Appendix~\ref{Appendix:pro3d1d2}. Next, we consider the strong quantum nonlocality for UPBs.
\begin{proposition}\label{pro:dAdBdCofthe}
	In $d_A\otimes d_B\otimes d_C$, $3\leq d_A\leq d_B\leq d_C$,  the UPB $\cup_{i=1}^3(\cA_i\cup\cB_i)\cup \cF \cup \{\ket{S}\}$ given by Eq.~\eqref{eq:UPB_d_1d_2d_3} is strongly nonlocal.	
\end{proposition}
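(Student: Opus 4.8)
The plan is to invoke the strong-nonlocality criterion recalled in Section~\ref{sec:pre}: it suffices to show that in each of the three bipartitions $A|BC$, $B|AC$, $C|AB$ the party holding two subsystems can only implement a trivial orthogonality-preserving POVM. A cyclic relabeling $A\to B\to C\to A$ carries the construction~\eqref{eq:UPB_d_1d_2d_3} to the analogous construction with the dimensions cyclically permuted (it permutes the blocks $\cA_i,\cB_i$ among themselves and fixes $\cF$ and $\ket S$), so once the statement is available for all $3\le d_A,d_B,d_C$ it is enough to treat a single cut, say $A|BC$. Fix an orthogonality-preserving POVM $\{E_m\}$ on $BC$ and one element $E:=E_m=M^\dagger M$, regarded as a $d_Bd_C\times d_Bd_C$ positive semidefinite matrix in the computational basis $\{\ket j_B\ket k_C\}$; the goal is $E\propto\mathbb I_{d_Bd_C}$.

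Every state of the set is a product state, hence has a definite $A$-component, and whenever two states $\ket x_A\ket\mu_{BC}$, $\ket y_A\ket\nu_{BC}$ of the set satisfy $\langle x|y\rangle\neq 0$, orthogonality-preservation forces $\langle\mu|E|\nu\rangle=0$. The $BC$-components occurring here are tensor products of computational basis vectors with members of the orthogonal families $\{\ket{\eta_s}\},\{\ket{\xi_s}\},\{\ket{\beta_s}\}$ on $B$ and on $C$, and each such family, together with the relevant computational basis vectors, is a full orthogonal basis of the coordinate subspace it spans, as noted after~\eqref{eq:UPB_d_1d_2d_3}. The stopper $\ket S$ has $A$-component $\sum_i\ket i_A$, non-orthogonal to every $A$-component that appears, and $BC$-component $\big(\sum_j\ket j\big)_B\big(\sum_k\ket k\big)_C$ of full support, so $\langle\mu|E|S_{BC}\rangle=0$ for the $BC$-component $\mu$ of every other state.

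The argument then runs in three stages, following the method of~\cite{shi2021}. (i) \emph{Vanishing of cross-blocks.} The subsets $\cA_3$ and $\cB_3$ are special in that all their members share one $A$-component ($\ket{d_A-1}_A$, resp.\ $\ket 0_A$), so each supplies a large block of relations $\langle\mu|E|\nu\rangle=0$; combining these with the relations from $\cA_1,\cA_2,\cB_1,\cB_2$ and from $\ket S$, one applies the Block Zeros Lemma (Lemma~\ref{lem:zero}) to conclude that the blocks of $E$ joining distinct grid-regions of Fig.~\ref{fig:d_Ad_Bd_C} vanish. Here $\ket S$ is indispensable: each subset omits one ``corner'' state — the $\ket{\psi_i(0,0)},\ket{\phi_i(0,0)}$ removed to make the set unextendible — and the relations $\langle\mu|E|S_{BC}\rangle=0$ restore completeness of the truncated orthogonal families, which Lemma~\ref{lem:zero} requires. (ii) \emph{Triviality on blocks.} On each grid-region, with its complementary block now zero, one applies the Block Trivial Lemma (Lemma~\ref{lem:trivial}) using a computational basis vector $\ket j_B\ket k_C$ of that region that has nonzero overlap with every member of the relevant orthogonal family, obtaining that $E$ restricted to the region is proportional to the identity. (iii) \emph{Gluing.} One propagates the common proportionality constant across all regions via $\ket S$, whose $BC$-component has full support and overlaps every region, yielding $E=c\,\mathbb I_{d_Bd_C}$. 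Since the POVM element was arbitrary, the measurement on $BC$ is trivial, and by the criterion the UPB is strongly nonlocal.

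The main obstacle is the combinatorial bookkeeping in stages (i)–(ii): listing the grid-regions and the order in which their cross-blocks are killed and then trivialized, checking that each truncated orthogonal family is completed to a full basis of its span exactly by the stopper relations, and exhibiting in each region a basis vector that overlaps every member of the family so that Lemma~\ref{lem:trivial} applies. The $\cF$-block needs extra care, because its $A$-components $\ket{\beta_i}_A$ are pairwise orthogonal, so $\cF$ alone only relates $BC$-components with equal $A$-component, and pinning down $E$ on this block forces one to use its cross-relations with the $\cA_i,\cB_i$ and with $\ket S$. The $B|AC$ and $C|AB$ cuts then follow immediately from the cyclic relabeling.
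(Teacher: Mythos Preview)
Your three-stage outline is the right shape, but several of the key steps do not work as you describe, and the roles you assign to $\ket S$ and to the two lemmas are misplaced in ways that matter.

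\textbf{Stage (i).} You claim $\ket S$ is indispensable in the Block Zeros stage because the missing corner states $(0,0)$ leave the orthogonal families incomplete. This is not how the paper proceeds, and your proposed fix would not work: the single relation $\langle\mu|E|S_{BC}\rangle=0$ cannot complete a basis. Instead, one simply conditions on a \emph{nonzero} index in the $A$-component: for instance $\cA_1(\ket{\xi_1}_A)=\{\ket 0_B\ket{\eta_k}_C:k\in\bbZ_{d_C-1}\}$ is already a full basis of its span, because the excluded corner $(i,k)=(0,0)$ lies at $i=0$, not $i=1$. The Block Zeros Lemma then applies directly to the pairs among $\cA_1,\cA_2,\cB_1,\cB_2$, with no use of $\ket S$ at all. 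More seriously, you cannot kill the cross-blocks between $\cF^{(A)}$ and the other regions by Lemma~\ref{lem:zero}: when $d_A=3$ there is only one $A$-vector $\ket{\beta_0}_A$ in $\cF$, so $\cF$ alone gives no cross relations to other blocks. The paper instead uses the relations from $\cA_3(\ket{d_A-1}_A)$ against $\cA_1(\ket{\xi_1}_A)$, substitutes the zeros already established, and solves a full-rank linear system (the Vandermonde-type matrices $H_1,H_2,H_3$ in Appendix~\ref{Appendix:prodAdBdC}) to force ${}_{\cF^{(A)}}E_{\cA_1^{(A)}}=\mathbf 0$. This is a genuinely different mechanism from the Block Zeros Lemma and is the main technical point of Step~1.

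\textbf{Stage (ii).} The Block Trivial Lemma (Lemma~\ref{lem:trivial}) does \emph{not} apply directly to each region once the cross-blocks vanish. Its hypothesis requires a computational basis vector $\ket{u_t}$ whose row ${}_{\{\ket{u_t}\}}E_{\cS\setminus\{\ket{u_t}\}}$ is already zero; after Stage~(i) you know $E$ is block diagonal, but you have no information about off-diagonal entries \emph{within} a block such as $\cA_1^{(A)}$. The paper's Step~2 is therefore not an application of Lemma~\ref{lem:trivial} but an explicit algebraic computation: one writes $E_{\cA_1^{(A)}}=\sum_s a_s\ket0_B\bra0\otimes\ket{\eta_s}_C\bra{\eta_s}$, extracts from the $\cB_3$ relations and from $\ket S$ two linear constraints on the $a_s$ and $e_{0,0}$ (Eqs.~\eqref{eq:a_0}--\eqref{eq:S}), and solves to get $a_0=a_1=\cdots=a_{d_C-2}$, i.e.\ $E_{\cA_1^{(A)}}=k\,\bbI$. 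Only \emph{after} this initial block has been trivialized does the Block Trivial Lemma become available, in Step~3, applied to $\cB_3^{(A)}$ using a basis vector from $\cA_1^{(A)}\cap\cB_3^{(A)}$ to furnish the required zero row.

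\textbf{Stage (iii).} The propagation of the constant is not via the full support of $\ket S$ but via the nonempty overlaps of adjacent grid regions (e.g.\ $\cA_1^{(A)}\cap\cB_3^{(A)}\neq\emptyset$), together with the symmetry of Fig.~\ref{fig:d_Ad_Bd_C}.

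In short, the actual proof uses $\ket S$ only in Steps~2--3 (to supply the extra relation that pins down $a_0$, and to complete the orthogonal family on $\cB_3^{(A)}$ before invoking Lemma~\ref{lem:trivial}), not in the Block Zeros stage; and the first block to be shown proportional to the identity requires a direct computation, not Lemma~\ref{lem:trivial}.
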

\begin{proof}
 Denote $\cU:=\cup_{i=1}^3(\cA_i\cup\cB_i)\cup \cF \cup \{\ket{S}\}$.  Let $B$ and $C$ come together to  perform a joint  orthogonality-preserving POVM $\{E=M^{\dagger}M\}$, where $E=(a_{ij,k\ell})_{i,k\in\bbZ_{d_B},j,\ell\in \bbZ_{d_C}}$. Then the postmeasurement states $\{\bbI_A\otimes M\ket{\psi}:\ \ket{\psi}\in \cU\}$ should  keep being mutually orthogonal. Assume that $\ket{\psi_1}_A\ket{\psi_2}_B\ket{\psi_3}_C$, $\ket{\varphi_1}_A\ket{\varphi_2}_B\ket{\varphi_3}_C\in  \cU$. Then
	\begin{equation}
	{}_A\bra{\psi_1}{}_B\bra{\psi_2}{}_C\bra{\psi_3}\bbI_A\otimes E\ket{\varphi_1}_A\ket{\varphi_2}_B\ket{\varphi_3}_C=\braket{\psi_1}{\varphi_1}_A({}_B\bra{\psi_2}{}_C\bra{\psi_3} E\ket{\varphi_2}_B\ket{\varphi_3}_C)=0.
	\end{equation}
	If $\braket{\psi_1}{\varphi_1}_A\neq 0$, then ${}_B\bra{\psi_2}{}_C\bra{\psi_3} E\ket{\varphi_2}_B\ket{\varphi_3}_C=0$.
	By using this property, we need to show that $E\propto \bbI$. If we can show that $E\propto \bbI$, then it means that $BC$ can only perform a trivial orthogonality-preserving POVM.  Since the nine subsets $\cA_i,\cB_i$ $(i=1,2,3,4)$, $\cF$  in any bipartition of  $\{A|BC, C|AB, B|CA\}$ correspond to a similar grid as Fig.~\ref{fig:d_Ad_Bd_C}, this implies that any of the party $\{BC, AB, CA\}$ can only perform a trivial orthogonality-preserving POVM. Then we obtain that the UPB $\cU$ is strongly nonlocal. More details for showing $E\propto \bbI$ are given in Appendix~\ref{Appendix:prodAdBdC}.
	\end{proof}
\vspace{0.4cm}

Note that the states in $\cA_i$ or $\cB_i$ $(i=1,2,3,4)$ in Eq.~\eqref{eq:UPB_d_1d_2d_3} are defined by the outermost layer of a $d_A\times d_B\times d_C$ cube, and the states in $\cF$ are just  defined by all inner cells. By observing this, we can construct more strongly nonlocal UPBs in  $d_A\otimes d_B\otimes d_C$  by continuing to decompose $\cF$ in Fig.~\ref{fig:d_Ad_Bd_C} by the similar tiling method. Suppose we are on the $n$-th layer from outside to inside, $0\leq n\leq \fl{d_A-3}{2}$. Let $X_n:=d_X-2n$ for $X\in \{A,B,C\}$. Then we can define the following states,
 \begin{equation}\label{eq:outlayer}
 \begin{aligned}
 \cA_1^{(n)}&:=\{\ket{\xi_i^{(n)}}_A\ket{n}_B\ket{\eta_k^{(n)}}_C:\ (i,k) \in\bbZ_{A_n-1}\times \bbZ_{C_n-1}\setminus \{(0,0)\}\},\\
 \cA_2^{(n)}&:=\{\ket{\xi_i^{(n)}}_A\ket{\eta_j^{(n)}}_B\ket{d_C-1-n}_C:\ (i,j) \in\bbZ_{A_n-1}\times \bbZ_{B_n-1} \setminus \{(0,0)\} \},\\			
 \cA_3^{(n)}&:=\{\ket{d_A-1-n}_A\ket{\xi_j^{(n)}}_B\ket{\eta_k^{(n)}}_C:\ (j,k) \in\bbZ_{B_n-1}\times \bbZ_{C_n-1}\setminus \{(0,0)\}\},\\
 \cA_4^{(n)}&:=\{|d_A-1-n\rangle_A|d_B-1-n\rangle_B|d_C-1-n\rangle_C\},\\
 \cB_1^{(n)}&:=\{\ket{\eta_i^{(n)}}_A\ket{d_B-1-n}_B\ket{\xi_k^{(n)}}_C:\ (i,k) \in\bbZ_{A_n-1}\times \bbZ_{C_n-1}\setminus \{(0,0)\} \},\\
 \cB_2^{(n)}&:=\{\ket{\eta_i^{(n)}}_A\ket{\xi_j^{(n)}}_B\ket{n}_C:\ (i,j)\in\bbZ_{A_n-1}\times \bbZ_{B_n-1}\setminus \{(0,0)\} \},\\
 \cB_3^{(n)}&:=\{\ket{n}_A\ket{\eta_j^{(n)}}_B\ket{\xi_k^{(n)}}_C:\ (j,k) \in\bbZ_{B_n-1}\times \bbZ_{C_n-1}\setminus \{(0,0)\}\}, \\
 \cB_4^{(n)}&:= \{|n\rangle_A|n\rangle_B|n\rangle_C\},\\
 \cF^{(n)}&:=\{\ket{\beta_i^{(n)}}_A\ket{\beta_j^{(n)}}_B\ket{\beta_k^{(n)}}_C:\ (i,j,k)\in\bbZ_{A_n-2}\times\bbZ_{B_n-2}\times \bbZ_{C_n-2}\setminus \{(0,0,0)\}\},\\
 \ket{S}&=\left(\sum_{i=0}^{d_A-1}\ket{i}\right)_A\left(\sum_{j=0}^{d_B-1}\ket{j}\right)_B\left(\sum_{k=0}^{d_C-1}\ket{k}\right)_C,		
 \end{aligned}
 \end{equation}
  where $\ket{\eta_s^{(n)}}_X=\sum_{t=n}^{X_n+n-2}w_{X_n-1}^{s(t-n)}\ket{t}_X$, and $\ket{\xi_s^{(n)}}_X=\sum_{t=n}^{X_n+n-2}w_{X_n-1}^{s(t-n)}\ket{t+1}_X$, for $s\in\bbZ_{X_n-1}$, and  $X\in\{A,B,C\}$,  $\ket{\beta_s^{(n)}}_X=\sum_{t=n}^{X_n+n-3}w_{X_n-2}^{s(t-n)}\ket{t+1}_X$ for $s\in \bbZ_{X_n-2}$, and $X\in\{A,B,C\}$.

Note that $\{\ket{\eta_s^{(n)}}_X\}_{s\in\bbZ_{X_n-1}}$, $\{\ket{\xi_s^{(n)}}_X\}_{s\in\bbZ_{X_n-1}}$, and $\{\ket{\beta_s^{(n)}}_X\}_{s\in\bbZ_{X_n-2}}$ are three  orthogonal sets, $X\in\{A,B,C\}$, which are spanned by $\{\ket{t}_X\}_{t=n}^{X_n+n-2}$,  $\{\ket{t}_X\}_{t=n+1}^{X_n+n-1}$, and  $\{\ket{t}_X\}_{t=n+1}^{X_n+n-2}$, respectively. This extends the definition of states in Eq.~\eqref{eq:UPB_d_1d_2d_3} from $n=0$ to general $n$. Specially, $\cA_i^{(0)}, \cB_i^{(0)}$ $(i=1,2,3,4)$, $\cF^{(0)}$ are $\cA_i, \cB_i$ $(i=1,2,3,4)$, $\cF$ of Eq.~\eqref{eq:UPB_d_1d_2d_3} exactly, which correspond to Fig.~\ref{fig:d_Ad_Bd_C} in $A|BC$ bipartition. Next, when $d_A\geq 5$, $\cA_i^{(0)}, \cB_i^{(0)}$, $\cA_i^{(1)}, \cB_i^{(1)}$ $(i=1,2,3,4)$, $\cF^{(1)}$ correspond to Fig.~\ref{fig:d_Ad_Bd_C_A_BC} in $A|BC$ bipartition. Then we have the following theorem.

 \begin{figure}[H]
 	\centering
 	\includegraphics[scale=0.6]{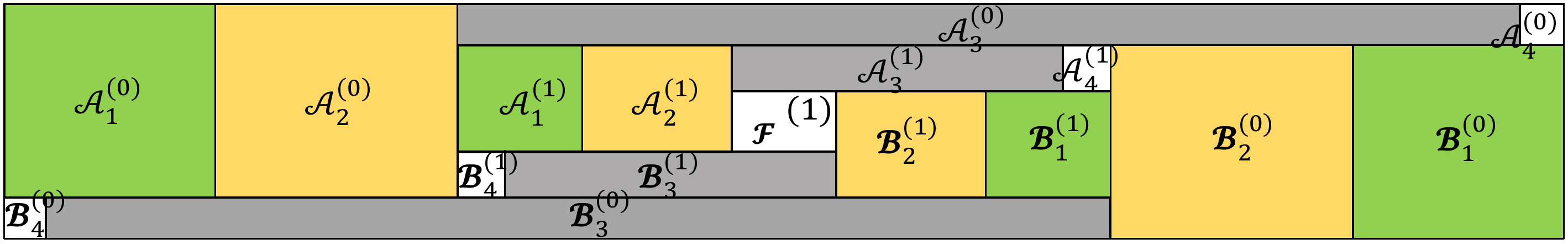}
 	\caption{The corresponding $d_A\times d_Bd_C$ grid of $\cA_i^{(0)}, \cB_i^{(0)}$, $\cA_i^{(1)}, \cB_i^{(1)}$ $(i=1,2,3,4)$, $\cF^{(1)}$ (Eq.~\eqref{eq:outlayer}) in $A|BC$ bipartition,  where $d_A\geq 5$.  }\label{fig:d_Ad_Bd_C_A_BC}
 \end{figure}

\begin{theorem}\label{thm:d1d2d3}
	In $d_A\otimes d_B\otimes d_C$, $3\leq d_A\leq d_B\leq d_C$, for any $0\leq n\leq \fl{d_A-3}{2}$, the set
	
	\begin{equation*}
	\cU_n:=\cup_{t=0}^n(\cup_{i=1}^3(\cA_i^{(t)}\cup\cB_i^{(t)}))\cup \cF^{(n)}\cup \{\ket{S}\}
	\end{equation*}  given by Eq.~\eqref{eq:outlayer} is a strongly nonlocal UPB of size $d_Ad_Bd_C-8(n+1)$.	
\end{theorem}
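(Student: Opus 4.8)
The statement bundles three claims: (i) $|\cU_n| = d_Ad_Bd_C - 8(n+1)$, (ii) $\cU_n$ is a UPB, and (iii) $\cU_n$ is strongly nonlocal. Claim (i) I would dispatch first by a direct count: for each layer $t$ the shell states $\cup_{i=1}^3(\cA_i^{(t)}\cup\cB_i^{(t)})$ together with eight ``defect'' states — the six index-$(0,0)$ states removed from $\cA_1^{(t)},\cA_2^{(t)},\cA_3^{(t)},\cB_1^{(t)},\cB_2^{(t)},\cB_3^{(t)}$ and the two corner states $\cA_4^{(t)},\cB_4^{(t)}$ (which are absent from $\cU_n$) — exhaust the $t$-th shell of the $d_A\times d_B\times d_C$ grid; $\cF^{(n)}$ together with the single removed center state $\ket{\beta_0^{(n)}}_A\ket{\beta_0^{(n)}}_B\ket{\beta_0^{(n)}}_C$ exhausts the innermost $A_n\times B_n\times C_n$ box; and $\ket{S}$ adds one state. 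Hence $|\cU_n| = d_Ad_Bd_C - (8(n+1)+1) + 1$, and, using that the $w$-root families are orthogonal bases of the relevant coordinate subspaces (checked exactly as for Eq.~\eqref{OPB334}), the space $\cH_1^{\perp}:=(\text{span}(\cU_n\setminus\{\ket{S}\}))^{\perp}$ is spanned precisely by those $8(n+1)+1$ defect states.

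For (ii) the plan is to mimic the proof of Proposition~\ref{pro:3d1d2} (Appendix~\ref{Appendix:pro3d1d2}) with an extra layer index. Suppose toward a contradiction that a product state $\ket{\psi}$ is orthogonal to $\cU_n$; then $\ket{\psi}\in\cH_1^{\perp}$, so it is a linear combination, with complex coefficients $a_t,b_t,c_t,a'_t,b'_t,c'_t,d_t,d'_t$ for $0\le t\le n$ and one coefficient $e$, of the defect states. Writing $\ket{\psi}$ as a $d_A\times d_Bd_C$ matrix $M$ in the $A|BC$ cut exhibits a nested ``onion'' pattern: each layer $t$ has the nine-block shape of Fig.~\ref{fig:d_Ad_Bd_C} wrapped around layer $t+1$, and separability forces $\rank(M)=1$; rearranging the appropriate rows in the $AB|C$ cut produces, for each layer, sub-matrices of rank at most $1$ (the analogues of $M_2,M_0$ in Example~\ref{example:334}); and $\braket{S}{\psi}=0$ gives $\text{sum}(M)=0$. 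I would then run the case analysis outside-in: on layer $t=0$ the rank-one relations among $a_0,\dots,d'_0$ together with the $AB|C$ rank conditions and the stopper force all eight coefficients to vanish; with that shell cleared the identical argument applies to layer $t=1$, and so on to $t=n$; finally $\text{sum}(M)=0$ kills $e$, so $M=0$, contradicting $\rank(M)=1$. Thus $\cH^{\perp}$ contains no product state.

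For (iii) I would invoke the criterion recalled in Sec.~\ref{sec:pre}: it suffices that each of the parties $BC$, $CA$, $AB$ can only perform a trivial orthogonality-preserving POVM, and — since in each cut $A|BC$, $B|CA$, $C|AB$ the nine subsets form a grid like Fig.~\ref{fig:d_Ad_Bd_C}, exactly the symmetry used in Proposition~\ref{pro:dAdBdCofthe} — it is enough to treat $BC$. So let $B,C$ jointly perform $\{E=M^{\dagger}M\}$ on $\bbC^{d_B}\otimes\bbC^{d_C}$; from any two states of $\cU_n$ with non-orthogonal $A$-parts one gets ${}_B\bra{\psi_2}{}_C\bra{\psi_3}E\ket{\varphi_2}_B\ket{\varphi_3}_C=0$. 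Feeding the families $\cA_i^{(t)},\cB_i^{(t)}$ into the Block Zeros Lemma (Lemma~\ref{lem:zero}) layer by layer annihilates, block by block, all entries of $E$ between distinct ``regions'' of the $BC$ index set cut out by the onion decomposition; feeding each family together with $\ket{S}$ into the Block Trivial Lemma (Lemma~\ref{lem:trivial}) forces every remaining diagonal block of $E$ to be a scalar multiple of $\bbI$; and a connectedness argument through $\ket{S}$ (whose $BC$-part overlaps all the relevant vectors) equates the scalars, so $E\propto\bbI$. This is the argument of Appendix~\ref{Appendix:prodAdBdC} repeated once per layer, and gives strong nonlocality.

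The hard part will be the case analysis in (ii): one must verify that for every layer, assuming any single one of its eight coefficients is nonzero yields a contradiction, and this means simultaneously juggling the global $\rank(M)=1$ constraint (which couples a layer to its neighbours through the shared corner rows and columns), the rank-$\le 1$ constraints of the $AB|C$ sub-blocks of every layer, and the one scalar constraint $\text{sum}(M)=0$ — i.e. checking that the outside-in induction actually closes. By comparison, the bookkeeping in (iii) — enumerating which block pairs of $E$ each application of Lemma~\ref{lem:zero} kills, so that only the diagonal blocks survive before Lemma~\ref{lem:trivial} is applied — is routine but lengthy, and the size count in (i) is immediate.
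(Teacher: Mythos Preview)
Your proposal is correct and follows essentially the same route as the paper. For (ii) you run the outside-in layer-by-layer elimination using the rank-$1$ condition on $M$ in $A|BC$, the rank-$\le 1$ conditions on the $AB|C$ row-rearrangements, and the stopper constraint $\mathrm{sum}(M)=0$ --- exactly the scheme of Appendix~\ref{Appendix:thmd1d2d3}, which displays the nested matrices $M^{(d_A)},M^{(d_A-2)},\dots,M^{(d_A-2s)}$ and invokes the Example~\ref{example:334}/Proposition~\ref{pro:3d1d2} case analysis at each shell; for (iii) the paper's proof is the one-liner ``by induction on $t$ along with Proposition~\ref{pro:dAdBdCofthe}'', which is precisely your ``Appendix~\ref{Appendix:prodAdBdC} repeated once per layer'' (one small correction: in that appendix the scalars on the diagonal blocks are equated via the nonempty overlaps $\cA_1^{(A)}\cap\cB_3^{(A)}$ etc., not directly through $\ket{S}$, though $\ket{S}$ is what supplies the missing $(0,0)$-indexed state needed before Lemma~\ref{lem:trivial} can be applied).
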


The proof of Theorem~\ref{thm:d1d2d3} is given in Appendix~\ref{Appendix:thmd1d2d3}. By now, we have shown strongly nonlocal UPBs exist in any three-partite systems. For four-partite systems,  strongly nonlocal UPBs are shown to exist in Supplementary material \cite{Supplementary} by using the similar method. In \cite{shi20211}, the authors gave a decomposition of the $5$-dimensional hypercube, which may be used for constructing strongly nonlocal UPBs in any five-partite systems. However, it  requires more calculations. We leave this as an open question.

\section{Entanglement-assisted discrimination}\label{sec:discrim}

In this section, we investigate the local discrimination of the strongly nonlocal UPB in $3\otimes 3\otimes 4$ in Example~\ref{example:334} by using entanglement as a resource. In a protocol of local quantum state discrimination, a multipatite quantum system is prepared with
a state which is secretly chosen from a known set, and the
purpose is to determine in which state the system is by using LOCC only. If the set of states is locally indistinguishable, then additional entanglement resources can assist for perfect discrimination. This process is called entanglement-assisted discrimination \cite{cohen2008understanding}.

\begin{figure}[H]
	\centering
	\includegraphics[scale=1]{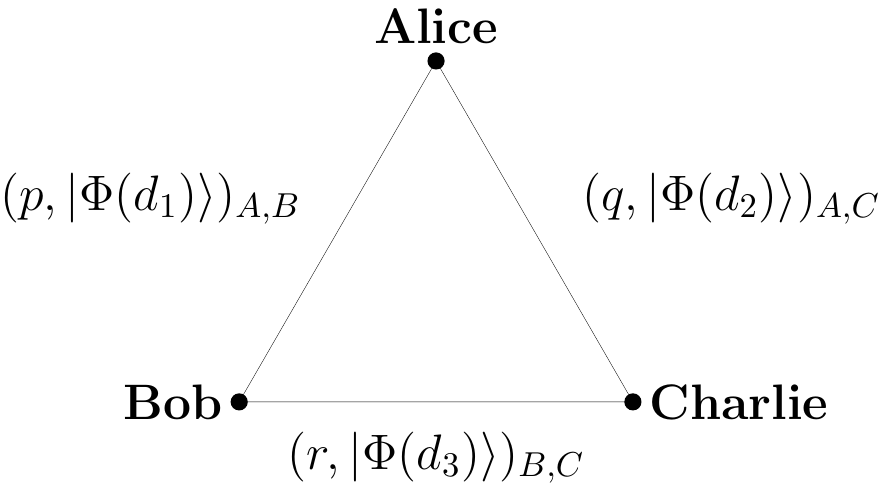}
\caption{Entanglement-assisted discrimination. A strongly nonlocal tripartite UPB is shared by Alice, Bob and Charlie. Their aim is to distinguish these states in the UPB by using LOCC only. For perfect discrimination of the UPB, an amount $p$ of the maximally entangled state $\ket{\Phi(d_1)}_{A,B}$ is shared by Alice and Bob; an amount $q$ of the maximally entangled state $\ket{\Phi(d_2)}_{A,C}$ is shared by Alice and Charlie, and an amount $r$ of the maximally entangled state $\ket{\Phi(d_2)}_{B,C}$ is shared by Bob and Charlie, where $\ket{\Phi(d)}_{X,Y}=\sum_{k=0}^{d-1}\ket{k}_X\ket{k}_Y$ for $X,Y\in\{A,B,C\}$.  It consumes $p\log_2d_1+q\log_2d_2+r\log_2d_3$ ebits of entanglement resource.     }\label{Fig:distinguish}
	\end{figure}

Assume a strongly nonlocal tripartite UPB is shared by Alice, Bobs and Charlie. Since a strongly nonlocal UPB is locally indistinguishable in any bipartition, a perfect discrimination of this set needs a resource state that must be entangled in all bipartitions. The configuration of entanglement resources can be described by $\{(p,\ket{\Phi(d_1)})_{A,B}, (q,\ket{\Phi(d_2)})_{A,C},(r,\ket{\Phi(d_3)})_{B,C}\}$ \cite{Sumit2019Genuinely,2020Strong}. Here, $(p,\ket{\Phi(d_1)})_{A,B}$ means that an amount $p$ of the maximally entangled state $\ket{\Phi(d_1)}$  is shared by Alice and Bob; $(q,\ket{\Phi(d_2)})_{A,C}$ means that an amount $q$ of the maximally entangled state $\ket{\Phi(d_2)}$  is shared by Alice and Charlie; $(r,\ket{\Phi(d_3)})_{B,C}$ means that an amount $r$ of the maximally entangled state $\ket{\Phi(d_3)}$  is shared by Bob and Charlie, where $\ket{\Phi(d)}_{X,Y}=\sum_{k=0}^{d-1}\ket{k}_X\ket{k}_Y$ for $X,Y\in\{A,B,C\}$. It consumes $p\log_2d_1+q\log_2d_2+r\log_2d_3$ ebits of entanglement resource. See Fig.~\ref{Fig:distinguish} for this configuration of entanglement-assisted discrimination.  Now, we give a  discrimination protocol  for the strongly nonlocal UPB given by Eq.~\eqref{OPB334}.

\begin{proposition}\label{pro:dis}
	In $3\otimes 3\otimes 4$, the strongly nonlocal UPB $\cup_{i=1}^3(\cA_i\cup\cB_i)\cup\cF\cup \{\ket{S}\}$ given by Eq.~\eqref{OPB334} can be locally distinguished by using $\{(1,\ket{\Phi(2)})_{A,B}, (0,\ket{\Phi(d_2)})_{A,C},(1,\ket{\Phi(2)})_{B,C}\}$ for any positive $d_2$, which consumes $2$ ebits of entanglement resource.
\end{proposition}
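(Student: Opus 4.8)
The plan is to exhibit an explicit LOCC protocol, assisted only by the Bell pairs $\ket{\Phi(2)}_{A,B}$ and $\ket{\Phi(2)}_{B,C}$ — the resource shared between Alice and Charlie is assigned amount $0$, hence is never used, which is why $d_2$ may be arbitrary — that deterministically identifies which of the $28$ states of Eq.~\eqref{OPB334} was distributed. The cost is then $p\log_2 d_1+q\log_2 d_2+r\log_2 d_3 = 1\cdot 1+0+1\cdot 1 = 2$ ebits, strictly below the $2\log_2 3\approx 3.17$ ebits of the teleportation-based protocol (two copies of $\ket{\Phi(3)}$), which gives the proposition. The only tool needed is that a shared $\ket{\Phi(2)}$ together with one round of classical communication implements a noiseless qubit channel (teleportation), so that a pair of parties holding a Bell pair can act jointly on a two-dimensional ``handle'' of their combined system; Bob plays the role of a hub, absorbing one teleported qubit from Alice and one from Charlie.

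First I would record why the protocol must open with an entanglement-assisted move: by strong nonlocality (Proposition~\ref{pro:dAdBdCofthe}) no single party can make a nontrivial orthogonality-preserving measurement, and in fact any filtering of rank at most $2$ on $A$ (or on $C$) alone merges two of the $\ket{\xi_i}$ states onto one, or two of the $\ket{\eta_i}$ states onto one, because of the stopper $\ket{S}$. So the protocol starts with a joint two-outcome measurement on $AB$, realized through $\ket{\Phi(2)}_{A,B}$ and chosen to cut along the ``tile'' overlaps visible in Fig.~\ref{fig:334} (using the $\cA_i\leftrightarrow\cB_i$ symmetry one can do this symmetrically), with the outcome broadcast. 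In each resulting branch the surviving states, once the teleportation has merged the relevant qubit of Alice into Bob, form a small orthogonal product set; a second, $\ket{\Phi(2)}_{B,C}$-assisted move on $BC$ then reduces every branch to a set that the three parties finish off by a round of computational-basis measurements, with the stopper $\ket{S}$ and the single state of $\cF$ isolated into their own branches (their reduced states are full rank — precisely why they needed entanglement — and, being orthogonal to everything else, they drop out as soon as the effective dimension has shrunk). Collecting Bob's final registers together with all broadcast outcomes then names the state.

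The main obstacle is verifying that this decision tree really separates all $28$ states while teleporting only a single qubit at each node: one must check, branch by branch, that no two inputs ever coincide after a measurement and that the final readout is unambiguous. This is a finite but delicate computation that has to keep track of (i) the pairwise overlaps of $\ket{\eta_s},\ket{\xi_s},\ket{\beta_s}$ in $3\otimes 3\otimes 4$ and, especially, the coincidences that are forced whenever a register is projected onto a two-dimensional subspace, and (ii) the Pauli byproducts of the teleportation steps, which must be propagated so that Bob's last measurement is taken in the correct basis. I expect essentially all of the work to lie in this bookkeeping; conceptually the protocol is just ``two qubit channels routed through Bob cut the two tile directions of this UPB,'' a three-partite analogue of Cohen's single-Bell-pair protocol for the $3\otimes 3$ tiles UPB \cite{cohen2008understanding}.
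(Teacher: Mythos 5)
There is a genuine gap here: the proposition asserts the existence of a working discrimination protocol with the stated resources, so the proof must actually exhibit a protocol and verify that it distinguishes all $28$ states of Eq.~\eqref{OPB334}. Your proposal stops exactly short of this. You describe a strategy (teleport one qubit from Alice to Bob and one from Charlie to Bob, then cut along the tile structure of Fig.~\ref{fig:334}) and then explicitly defer the decisive part --- ``verifying that this decision tree really separates all $28$ states \ldots I expect essentially all of the work to lie in this bookkeeping.'' That bookkeeping \emph{is} the proof; without a concrete decision tree and a branch-by-branch check that orthogonality (hence perfect distinguishability) survives every measurement, nothing has been established. The paper's proof in Appendix~\ref{Appendix:prodis} consists precisely of this missing content: explicit measurement operators ($M_1$, $L_1$ in Step~1, the four-outcome measurement $M_{2,1},M_{2,2},M_{2,3},\overline{M_2}$ on Bob's qutrit together with his two ancilla qubits, then $M_3,\dots,M_6$), together with the explicit postmeasurement states in every branch.

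There is also an unaddressed obstruction in the specific mechanism you chose. A $\ket{\Phi(2)}$ pair teleports only a qubit, while Alice holds a qutrit and Charlie a ququart, so ``Bob absorbing one teleported qubit from Alice and one from Charlie'' requires first specifying a local isometry splitting each of those systems into a qubit to be sent plus a retained register, and then one must still prove that the retained registers can be finished off by local measurements --- neither choice nor verification appears in your sketch, and your own opening observation (that coarse local filterings on $A$ or $C$ merge states because of the stopper) shows the choice is delicate. Note that the paper's protocol does not use teleportation at all: it uses the Cohen-style device of measuring the system jointly with the local half of a shared Bell pair, e.g.\ $M_1=P[(\ket{0},\ket{1})_A;\ket{0}_a]+P[\ket{2}_A;\ket{1}_a]$ and $L_1=P[(\ket{1},\ket{2},\ket{3})_C;\ket{0}_c]+P[\ket{0}_C;\ket{1}_c]$, so that the ``which-block'' flags are coherently written into the ancillas $b_1,b_2$ held by Bob, who then measures his qutrit together with those flags; no Bell measurements or Pauli corrections occur. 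Your teleportation route is not obviously unworkable, but as it stands it is a plausible plan rather than a proof, and making it rigorous would require exactly the kind of explicit construction and case analysis the paper carries out.
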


The proof of Proposition~\ref{pro:dis} is given in Appendix~\ref{Appendix:prodis}.
 If we only use the teleportation-based protocol \cite{Bennett1993Teleporting,Sumit2019Genuinely}, the strongly nonlocal UPB given by Eq.~\eqref{OPB334} can be locally distinguished by using  $\{(1,\ket{\Phi(3)})_{A,B}, (0,\ket{\Phi(d_2)})_{A,C},(1,\ket{\Phi(3)})_{B,C}\}$, which consumes $2\log_23$ ebits of entanglement resource. Thus the protocol in Proposition~\ref{pro:dis}  consumes less entanglement resource than the teleportation-based protocol.

\section{Conclusion and discussion}\label{sec:con}
In this work, we have constructed a series of strongly nonlocal UPBs of different sizes in any possible three and four-partite systems. We have also proposed an entanglement-assisted protocol for local discrimination of the strongly nonlocal UPB in $3\otimes 3\otimes 4$.  All of our  UPBs of large sizes can be used  to construct PPT entangled states with small rank. For example, in Proposition~\ref{pro:3d1d2}, for a normalized UPB $\{\ket{\psi_i}\}_{i=1}^{d_Ad_Bd_C-8}$ in $d_A\otimes d_B\otimes d_C$  for $3\leq d_A\leq d_B\leq d_C$, the mixed state
\begin{equation*}
\rho=\frac{1}{8}(\bbI-\sum_{i=1}^{d_Ad_Bd_C-8}\ketbra{\psi_i}{\psi_i})
\end{equation*}
is a PPT entangled state with rank $8$. Further our UPBs of large sizes can be also used for low-rank noisy bound entangled states which satisfy the range criterion \cite{Bej}. Bipartite noisy bound entangled states that satisfy the range criterion were shown in \cite{halder2019construction}. There are some open questions left. Whether  UPBs with the minimum size can show strong quantum nonlocality? Whether one can generalize our constructions on $d_1\otimes d_2 \otimes \cdots \otimes d_n$ for $n\geq 5$?

\section*{Acknowledgments}
\label{sec:ack}	
The authors are very grateful to the reviewers for providing
many useful suggestions which have greatly improved the presentation of our paper.  F.S. and X.Z. were supported by the NSFC under Grants No. 11771419 and No. 12171452, the Anhui Initiative in Quantum Information Technologies under Grant No. AHY150200, and the National Key Research and Development Program of China 2020YFA0713100. M.-S.L. was supported by
NSFC (Grants No. 12005092) and the China Postdoctoral Science Foundation (2020M681996). L.C. was supported by the  NNSF of China (Grant No. 11871089), and the Fundamental Research Funds for the Central Universities (Grant No. ZG216S2005).

\appendix

\section{The proof of Proposition~\ref{pro:3d1d2}}\label{Appendix:pro3d1d2}
\begin{proof}
	 For the same discussion as Example~\ref{example:334}, we can assume that $\ket{\psi}$ is a product state in the complementary space of the space spanned by the states in $\cup_{i=1}^3(\cA_i\cup\cB_i)\cup \cF \cup \{\ket{S}\}$. Then we consider the matrix form of  $\ket{\psi}$ in $A|BC$ and $AB|C$ bipartitions. First, we  consider $A|BC$ bipartition. We have
	\begin{equation}\label{eq:M}
	M=\left[\begin{array}{ccccccccccccccccc}
	a_0 &a_0 &\cdots &a_0 &b_0 &\cdots &b_0 &c_0 &\cdots &c_0 &c_0 &\cdots &c_0 &c_0 &\cdots &c_0 &d_0\\
	a_0 &a_0 &\cdots &a_0 &b_0 &\cdots &b_0 &e &\cdots &e &b_1 &\cdots &b_1 &a_1&\cdots &a_1 &a_1\\
  \vdots &\vdots &\ddots  &\vdots&\vdots&\ddots&\vdots&\vdots&\ddots&\vdots&\vdots&\ddots&\vdots&\vdots&\ddots&\vdots&\vdots\\
	a_0 &a_0 &\cdots &a_0 &b_0 &\cdots &b_0 &e &\cdots &e &b_1 &\cdots &b_1 &a_1&\cdots &a_1 &a_1\\
	d_1 &c_1 &\cdots &c_1 &c_1 &\cdots &c_1 &c_1 &\cdots &c_1 &b_1 &\cdots &b_1 &a_1 &\cdots &a_1 &a_1
	\end{array}	\right],
		\end{equation}
and
	\begin{equation}\label{eq:rank3d_1d_2}
     \rank(M)=1, \quad	\text{sum}(M)=0.
	\end{equation}
Then we consider $AB|C$ bipartition. We can rearrange  the first row of $M$ to the $d_B\times d_C$ matrix $M_{(d_A-1)}$  through $(AB,C)$ coordinates of $M$, and rearrange the last row of $M$ to the $d_B\times d_C$ matrix $M_{0}$  through $(AB,C)$ coordinates of $M$, where
	\begin{equation}
	M_{(d_A-1)}=\begin{bmatrix}\label{eq:M_23d_1d_2}
	c_0 &c_0 &\cdots &c_0 &d_0\\
	c_0 &c_0 &\cdots &c_0 &b_0\\
	\vdots 	&\vdots  	&\ddots  	&\vdots 	&\vdots\\
	c_0 &c_0 &\cdots &c_0 &b_0\\
	a_0 &a_0 &\cdots &a_0 &b_0\\
	\end{bmatrix}, \quad \rank(M_{(d_A-1)})=0\ \ \text{or} \ \  1,
	\end{equation}
and
		\begin{equation}
	M_0=\begin{bmatrix}\label{eq:M_03d_1d_2}
	b_1 &a_1 &\cdots &a_1 &a_1\\
	b_1 &c_1 &\cdots &c_1 &c_1\\
	\vdots 	&\vdots  	&\ddots  	&\vdots 	&\vdots\\
	b_1 &c_1 &\cdots &c_1 &c_1\\
	d_1 &c_1 &\cdots &c_1 &c_1\\
	\end{bmatrix}, \quad \rank(M_0)=0\ \ \text{or} \ \  1.
	\end{equation}
For the same proof as Example~\ref{example:334}, we can show that $M$ do not exist by Eqs.~\eqref{eq:rank3d_1d_2}, \eqref{eq:M_23d_1d_2} and \eqref{eq:M_03d_1d_2}. Thus  $\ket{\psi}$ must be an entangled state, and $\cup_{i=1}^3(\cA_i\cup\cB_i)\cup \cF \cup \{\ket{S}\}$ is a UPB.
\end{proof}
\vspace{0.4cm}

\section{The proof of Proposition~\ref{pro:dAdBdCofthe}}\label{Appendix:prodAdBdC}
\begin{proof}
	First of all, we need to introduce some notations which have been introduced by \cite{shi2021}. Let $\mathcal{S}=\{\ket{\psi_1}_A\ket{\psi_2}_B\ket{\psi_3}\}$ be a tripartite orthogonal product set. Define
    \begin{equation*}
    \mathcal{S}(|\psi\rangle_A):=\{ |\psi_2\rangle_B|\psi_3\rangle_C   :\   |\psi\rangle_A |\psi_2\rangle_B|\psi_3\rangle_C \in  \mathcal{S}\}.
   \end{equation*}
   Moreover, define  $\mathcal{S}^{(A)}=\{\ket{j}_B\ket{k}_C:\ j,k\in\bbZ_n\}$ as the support of $\mathcal{S}(|\psi\rangle_A)$  which spans  $\mathcal{S}(|\psi\rangle_A)$.
    For example, in Eq.~\eqref{OPB334}, $\cA_1:=\{\ket{\xi_i}_A\ket{0}_B\ket{\eta_j}_C:\ (i,j)\neq(0,0) \in\bbZ_2\times \bbZ_3\}$. Then $\cA_1(\ket{\xi_1}_A)=\{\ket{0}_A\ket{\eta_j}_C\}_{j\in \bbZ_3}$, $\cA_1^{(A)}=\{\ket{0}_B\ket{0}_C,\ket{0}_B\ket{1}_C,\ket{0}_B\ket{2}_C\}$, and $\cA_1(\ket{\xi_1}_A)$ is spanned by $\cA_1^{(A)}$. Actually,  $\cA_i^{(A)},\cB_i^{(A)}$ $(i=1,2,3,4)$, $\cF^{(A)}$ in Eq.~\eqref{OPB334}    can be easily observed by Fig.~\ref{fig:334}. They are the projection sets of $\cA_i,\cB_i$ $(i=1,2,3,4)$, $\cF$ in $BC$ party in Fig.~\ref{fig:334}. Now, we give three steps for the proof.
	
	\noindent{\bf Step 1} Since $\braket{\xi_1}{\eta_1}_A\neq 0$, applying Block Zeros Lemma to any two elements of $\{\cA_{1}(\ket{\xi_1}_A)$, $\cA_{2}(\ket{\xi_1}_A)$, $\cB_{2}(\ket{\eta_1}_A)$, $ \cB_{1}(\ket{\eta_1}_A)\}$, we obtain
	\begin{equation}\label{eq:four_orthogonal}
   {}_{\cA_{i}^{(A)}}E_{\cA_{j}^{(A)}}=\mathbf{0}, \ {}_{\cA_{i}^{(A)}}E_{\cB_{k}^{(A)}}=\mathbf{0}, \ {}_{\cB_{k}^{(A)}}E_{\cB_{\ell}^{(A)}}=\mathbf{0}, \ {}_{\cB_{k}^{(A)}}E_{\cA_{i}^{(A)}}=\mathbf{0},
 \end{equation}
for $1\leq i\neq j\leq 2$, $1\leq k\neq \ell\leq 2$.
	Note that if $d_A=3$, $\ket{\beta_i}_A$ must be  $\ket{\beta_0}_A$   in $\cF$. So we can not apply Block Zeros Lemma to $\cF(\ket{\beta_1}_A)$ and $\cA_{1}(\ket{\xi_1}_A)$. In order to obtain ${}_{\cF^{(A)}}E_{\cA_{1}^{(A)}}=\mathbf{0}$,   we consider $\cA_{3}(\ket{d_A-1}_A)$ and $\cA_{1}(\ket{\xi_1}_A)$. Then for  $(j, k)\in\bbZ_{d_B-1}\times \bbZ_{d_C-1}\setminus\{(0,0)\}$, and $i\in\bbZ_{d_C-1}$, we have
	\begin{equation}\label{eq:center_1}
		{}_B\bra{\xi_j}{}_C\bra{\eta_k}E\ket{0}_B\ket{\eta_i}_C=		{}_B\left(\sum_{t_1=0}^{d_B-2}w_{d_B-1}^{-jt_1}\bra{t_1+1}\right){}_C\left(\sum_{t_2=0}^{d_C-2}w_{d_C-1}^{-kt_2}\bra{t_2}\right)E\ket{0}_B\left(\sum_{t_3=0}^{d_C-2}w_{d_C-1}^{it_3}\ket{t_3}\right)_C=0.
	\end{equation}
 We have shown that ${}_B\bra{j+1}{}_C\bra{0}E\ket{0}_B\ket{i}_C=0$ for $j\in\bbZ_{d_B-1}$, $i\in\bbZ_{d_C-1}$, and ${}_B\bra{d_B-1}{}_C\bra{k+1}E\ket{0}_B\ket{i}_C=0$ for $k,i\in\bbZ_{d_C-1}$ by Eq.~\eqref{eq:four_orthogonal}. Then Eq.~\eqref{eq:center_1} can be expressed by
	\begin{equation*}
		{}_B\left(\sum_{t_1=0}^{d_B-3}w_{d_B-1}^{-jt_1}\bra{t_1+1}\right){}_C\left(\sum_{t_2=1}^{d_C-2}w_{d_C-1}^{-kt_2}\bra{t_2}\right)E\ket{0}_B\left(\sum_{t_3=0}^{d_C-2}w_{d_C-1}^{it_3}\ket{t_3}\right)_C=0,
\end{equation*}
i.e.
	\begin{equation*}
\sum_{t_1=0}^{d_B-3}\sum_{t_2=1}^{d_C-2}\sum_{t_3=0}^{d_C-2}w_{d_B-1}^{-jt_1}w_{d_C-1}^{-kt_2}w_{d_C-1}^{it_3}{}_B\bra{t_1+1}{}_C\bra{t_2}E\ket{0}_B\ket{t_3}_C=0,
\end{equation*}
for any $0\leq j\leq d_B-3$,  $1\leq k\leq d_C-2$, $0\leq i\leq d_C-2$. It means that
\begin{equation*}
[H_1^{\dagger}\otimes H_2^{\dagger}\otimes H_3]X=\textbf{0},
\end{equation*}
where
\begin{equation*}
H_1=\begin{pmatrix}
1 &1 &\cdots &1\\
1 &w_{d_B-1} &\cdots &w_{d_B-1}^{(d_B-3)}\\
\vdots &\vdots &\ddots &\vdots\\
1 &w_{d_B-1}^{(d_B-3)} &\cdots &w_{d_B-1}^{(d_B-3)^2}
\end{pmatrix}, \quad
H_2=\begin{pmatrix}
w_{d_C-1} &w_{d_C-1}^2 &\cdots &w_{d_C-1}^{(d_C-2)}\\
w_{d_C-1}^2 &w_{d_C-1}^4  &\cdots &w_{d_C-1}^{2(d_C-2)}\\
\vdots &\vdots &\ddots &\vdots\\
w_{d_C-1}^{(d_C-2)} &w_{d_C-1}^{2(d_C-2)}&\cdots &w_{d_C-1}^{(d_C-2)^2}
\end{pmatrix},
\end{equation*}
$H_3=(w_{d_C-1}^{ij})_{i,j\in\bbZ_{d_C-1}}$
and $X$ is a column vector,
\begin{equation*}
X=({}_B\bra{t_1+1}{}_C\bra{t_2}E\ket{0}_B\ket{t_3}_C)_{\{0\leq t_1\leq d_B-3, \ 1\leq t_2\leq d_C-2, \ 0\leq t_3\leq d_C-2\}}.
\end{equation*}
Since $H_1,H_2,H_3$ are all full-rank matrices, it implies that $H_1^{\dagger}\otimes H_2^{\dagger}\otimes H_3$ is a full-rank matrix. Then $X=\textbf{0}$, i.e.
\begin{equation*}
{}_B\bra{t_1+1}{}_C\bra{t_2}E\ket{0}_B\ket{t_3}_C=0, \quad \text{for} \ 0\leq t_1\leq d_B-3, \ 1\leq t_2\leq d_C-2, \ 0\leq t_3\leq d_C-2.
\end{equation*}
It also means that
\begin{equation}\label{eq:FA1}
{}_{\cF^{(A)}}E_{\cA_{1}^{(A)}}=\mathbf{0}.
\end{equation}
By using $\cA_{3}(\ket{d_A-1}_A)$ and $\cA_{2}(\ket{\xi_1}_A)$, we can also show that
\begin{equation}\label{eq:FA2}
{}_{\cF^{(A)}}E_{\cA_{2}^{(A)}}=\mathbf{0}
\end{equation}
by the similar discussion as above.
Further, by the symmetry of Fig.~\ref{fig:d_Ad_Bd_C}, we can also obtain that
\begin{equation}\label{eq:FB1B2}
{}_{\cF^{(A)}}E_{\cB_{1}^{(A)}}=\mathbf{0}, \quad {}_{\cF^{(A)}}E_{\cB_{2}^{(A)}}=\mathbf{0}.
\end{equation}
Thus, by Eqs.~\eqref{eq:four_orthogonal}, \eqref{eq:FA1}, \eqref{eq:FA2} and \eqref{eq:FB1B2}, $E$ is a block diagonal matrix. It can be expressed by
\begin{equation}\label{eq:matrixE}
E=E_{\cA_1^{(A)}}\oplus E_{\cA_2^{(A)}}\oplus E_{\cF^{(A)}}\oplus E_{\cB_2^{(A)}}\oplus E_{\cB_1^{(A)}}.
\end{equation}

\noindent{\bf Step 2}  Considering $\ket{S}$ and $\{\ket{\beta_0}_A\ket{\beta_j}_B\ket{\beta_k}_C \}_{(j,k) \in\bbZ_{d_B-2}\times\bbZ_{d_C-2}\setminus\{(0,0)\}}\subset \cF$, where $d_C\geq 4$, then by Eq.~\eqref{eq:matrixE}, we have
\begin{equation*}
{}_B\left(\sum_{i_1=0}^{d_B-1}\bra{i_1}\right){}_C\left(\sum_{i_2=0}^{d_C-1}\bra{i_2}\right)E\ket{\beta_j}_B\ket{\beta_k}_C={}_B\left(\sum_{i_1=1}^{d_B-2}\bra{i_1}\right){}_C\left(\sum_{i_2=1}^{d_C-2}\bra{i_2}\right)E\ket{\beta_j}_B\ket{\beta_k}_C=0,
\end{equation*}
for  $(j,k)\in\bbZ_{d_B-2}\times \bbZ_{d_C-2}\setminus\{(0,0)\}$.
Moreover, we have
\begin{equation*}
\left(\sum_{i_1=1}^{d_B-2}\ket{i_1}\right)_B\left(\sum_{i_2=1}^{d_C-2}\ket{i_2}\right)_C=\ket{\beta_0}_B\ket{\beta_0}_C.
\end{equation*}
Therefore, by using the states in $\{\ket{S}\}\cup\{\ket{\beta_0}_A\ket{\beta_j}_B\ket{\beta_k}_C\}_{(j,k) \in\bbZ_{d_B-2}\times\bbZ_{d_C-2}}\setminus\{(0,0)\}$, we obtain
\begin{equation}\label{eq:exist}
{}_B\bra{\beta_i}{}_C\bra{\beta_j}E\ket{\beta_k}_B\ket{\beta_\ell}_C=0, \quad \text{for} \  (i, j)\neq(k,\ell)\in\bbZ_{d_B-2}\times\bbZ_{d_C-2}.
\end{equation}
By Eq.~\eqref{eq:exist}, there exists a real number $e_{s,t}$  ($E^{\dagger}=E$)  for $(s,t)\in\bbZ_{d_B-2}\times\bbZ_{d_C-2}$,  such that
\begin{equation}\label{eq:EF}
E_{\cF^{(A)}}=\sum_{s=0}^{d_B-3}\sum_{t=0}^{d_C-3}e_{s,t}\ket{\beta_s}_B\bra{\beta_s}\otimes\ket{\beta_t}_C\bra{\beta_t}.
\end{equation}
Note that  Eq.~\eqref{eq:EF} also holds for $d_C=3$ (in this case, $E_{\cF^{(A)}}=e_{0,0}\ket{\beta_0}_B\bra{\beta_0}\otimes\ket{\beta_0}_C\bra{\beta_0}$, and $\ket{\beta_0}_X=\ket{1}_X$ for $X\in\{B,C\}$).
Next, by using the states in $\cA_1(\ket{\xi_1}_A)$, we have
\begin{equation*}
{}_B\bra{0}{}_C\bra{\eta_i}E\ket{0}_B\ket{\eta_j}_C=0, \quad \text{for} \  i\neq j\in\bbZ_{d_C-1}.
\end{equation*}
Then there exists a real number $a_s$ for $s\in\bbZ_{d_C-1}$  such that
\begin{equation*}
E_{\cA_1^{(A)}}=\sum_{s=0}^{d_C-2}a_s\ket{0}_B\bra{0}\otimes\ket{\eta_s}_C\bra{\eta_s}.
\end{equation*}
In the same way, there exist real numbers $a_s,b_s,c_t,d_t,e_{s,t}$ such that the operator
\begin{equation}\label{eq:operatorM}
\begin{aligned}
E=&\sum_{s=0}^{d_C-2}a_s\ket{0}_B\bra{0}\otimes\ket{\eta_{s}}_C\bra{\eta_{s}}+\sum_{s=0}^{d_B-2} b_s\ket{\eta_{s}}_B\bra{\eta_{s}}\otimes\ket{d_C-1}_C\bra{d_C-1}+\sum_{s=0}^{d_B-3}\sum_{t=0}^{d_C-3} e_{s,t}\ket{\beta_s}_B\bra{\beta_s}\otimes\ket{\beta_t}_C\bra{\beta_t}\\ &+\sum_{t=0}^{d_B-2}c_t\ket{\xi_{t}}_B\bra{\xi_{t}}\otimes\ket{0}_C\bra{0}+\sum_{t=0}^{d_C-2} d_t\ket{d_B-1}_B\bra{d_B-1}\otimes\ket{\xi_{t}}_C\bra{\xi_{t}}.
\end{aligned}
\end{equation}	
By using 	those states $ \{\ket{0}_A\ket{\eta_i}_B\ket{\xi_j}_C\}_{(i,j)\in\mathbb{Z}_{d_B-1}\times \mathbb{Z}_{d_C-1}\setminus\{(0,0)\}}=\cB_3$, we can show that
\begin{equation*}
{}_B\bra{\eta_k} {}_C\bra{\xi_\ell} E \ket{\eta_i}_B\ket{\xi_j}_C=0,  \quad \text{for} \ (k,\ell)\neq (i,j) \in\mathbb{Z}_{d_B-1}\times \mathbb{Z}_{d_C-1}\setminus \{(0,0)\}.
\end{equation*}
Assume $k=0,\ell\neq 0, i\neq 0, j=0$. By Eq.~\eqref{eq:operatorM}, we have
\begin{equation}\label{eq:thezero}
\begin{aligned}
0=&{}_B\bra{\eta_0} {}_C\bra{\xi_\ell}  E \ket{\eta_i}_B\ket{\xi_0}_C\\
=&\sum_{s=0}^{d_C-2}a_s   \braket{ \eta_0}{0}_B \braket{0}{\eta_i}_B  \braket{\xi_\ell}{\eta_s}_C \braket{\eta_s}{\xi_0}_C +
\sum_{s=0}^{d_B-3}\sum_{t=0}^{d_C-3}e_{s,t}\braket{\eta_0}{\beta_s}_B\braket{\beta_s}{\eta_i}_B\braket{\xi_\ell}{\beta_t}_C\braket{\beta_t}{\xi_0}_C\\
=&\sum_{s=0}^{d_C-2}a_s\braket{\xi_\ell}{\eta_s}_C \braket{\eta_s}{\xi_0}_C+w_{d_C-1}^{\ell}(d_B-2)(d_C-2)e_{0,0}.
\end{aligned}
\end{equation}
There are three cases for the terms in the summation of the last equality.
\begin{enumerate}[(a)]
	\item If $s=0$, then
	\begin{equation*}
	\braket{\xi_\ell}{\eta_0}_C\braket{\eta_0}{\xi_0}_C=(d_C-2)\sum_{n=1}^{d_C-2}w_{d_C-1}^{-(n-1)\ell}=-(d_C-2)w_{d_C-1}^{\ell}.
   \end{equation*}
   	\item If $s=\ell$, then
   \begin{equation*}
   \braket{\xi_\ell}{\eta_\ell}_C\braket{\eta_\ell}{\xi_0}_C=-\sum_{n=1}^{d_C-2}w_{d_C-1}^{\ell}=-(d_C-2)w_{d_C-1}^{\ell}.
   \end{equation*}
      \item If $s\neq 0,\ell$, then
   \begin{equation*}
   \braket{\xi_\ell}{\eta_s}_C\braket{\eta_s}{\xi_0}_C=-\sum_{n=1}^{d_C-2}w_{d_C-1}^{ns-(n-1)\ell}=-w_{d_C-1}^\ell\sum_{n=1}^{d_C-2}w_{d_C-1}^{(s-\ell)n}=w_{d_C-1}^\ell.
   \end{equation*}
\end{enumerate}
 Thus by Eq.~\eqref{eq:thezero}, we have
\begin{equation}\label{eq:a_0}
\sum_{s=0}^{d_C-2}a_s-(d_C-1)(a_0+a_\ell)+(d_B-2)(d_C-2)e_{0,0}=0.
\end{equation}
Since $\ell\neq 0\in\bbZ_{d_C-1}$, we must have $a_1=a_2=\cdots=a_{d_C-2}$. Then Eq.~\eqref{eq:a_0} can be expressed by
\begin{equation}\label{eq:a_1}
-(d_C-2)a_0-a_1+(d_B-2)(d_C-2)e_{0,0}=0.
\end{equation}
 Further, by using the states $\ket{S}$, $\ket{0}_A\ket{\eta_1}_B\ket{\xi_0}\in\cB_3$ and Eq.~\eqref{eq:operatorM} we obtain
\begin{equation*}
{}_B\left(\sum_{i_1=0}^{d_B-1}\bra{i_1}\right){}_C\left(\sum_{i_2=0}^{d_B-1}\bra{i_2}\right)E\ket{\eta_1}_B\ket{\xi_0}_C=(d_C-1)(d_C-2)a_0-(d_B-2)(d_C-2)^2e_{0,0}=0,
\end{equation*}
 i,e.
\begin{equation}\label{eq:S}
(d_C-1)a_0-(d_B-2)(d_C-2)e_{0,0}=0.
\end{equation}
Then, by Eqs.~\eqref{eq:a_1} and \eqref{eq:S}, it implies $a_0=a_1$. Thus $a_0=a_1=\ldots=a_{d_C-2}=k$. It means that the operator
\begin{equation*}
E_{\cA_1^{(A)}}=k\sum_{s=0}^{d_C-2}\ket{0}_B\bra{0}\otimes\ket{\eta_s}_C\bra{\eta_s},
\end{equation*}
 which is equivalent to
 \begin{equation}\label{eq:A1}
 E_{\cA_1^{(A)}}=k\bbI_{\cA_1^{(A)}}.
 \end{equation}

\noindent{\bf Step 3}
Considering $\ket{S}$ and  $\{\ket{0}_A\ket{\eta_i}_B\ket{\xi_j}_C\}_{(i,j) \in\bbZ_{d_B-1}\times\bbZ_{d_C-1}\setminus\{(0,0)\}}= \cB_3$. By using Eqs.~\eqref{eq:matrixE}  and \eqref{eq:A1}, we have
\begin{equation*}
{}_B\left(\sum_{i_1=0}^{d_B-1}\bra{i_1}\right){}_C\left(\sum_{i_2=0}^{d_C-1}\bra{i_2}\right)E\ket{\eta_i}_B\ket{\xi_j}_C={}_B\left(\sum_{i_1=0}^{d_B-2}\bra{i_1}\right){}_C\left(\sum_{i_2=1}^{d_C-1}\bra{i_2}\right)E\ket{\eta_i}_B\ket{\xi_j}_C=0.
\end{equation*}
Moreover, we have
\begin{equation*}
\left(\sum_{i_1=0}^{d_B-2}\ket{i_1}\right)_B\left(\sum_{i_2=1}^{d_C-1}\ket{i_2}\right)_C=\ket{\eta_0}_B\ket{\xi_0}_C.
\end{equation*}
Therefore, by using the states $\{\ket{S}\}\cup\{\ket{0}_A\ket{\eta_i}_B\ket{\xi_j}_C\}_{(i,j)\in\bbZ_{d_B-1}\times \bbZ_{d_C-1}\setminus\{(0,0)\}}$, we have
\begin{equation*}
{}_B\bra{\eta_k}_C\bra{\xi_\ell}E\ket{\eta_i}_B\ket{\xi_j}_C=0,  \quad  \text{for} \ (k,\ell)\neq (i,j)\in\bbZ_{d_B-1}\times \bbZ_{d_C-1}.
\end{equation*}
For any $\ket{j}_B\ket{k}_C\in \cA_1^{(A)}\cap\cB_3^{(A)}$, we have  ${}_{\{\ket{j}_B\ket{k}_C\}}E_{\cB_3^{(A)}\setminus \{\ket{j}_B\ket{k}_C\}}=\textbf{0}$ by Eqs.~\eqref{eq:matrixE}  and \eqref{eq:A1}.  Applying Block Trivial  Lemma to $\{\ket{\eta_i}_B\ket{\xi_j}_C\}_{(i,j)\in\bbZ_{d_B-1}\times \bbZ_{d_C-1}}$, we have
\begin{equation}\label{eq:B3}
E_{\cB_3^{(A)}}=k_1\bbI_{\cB_3^{(A)}}.
\end{equation} Since $\cA_1^{(A)}\cap \cB_3^{(A)}\neq \emptyset$, it implies $k=k_1$. Thus, by Eqs.~\eqref{eq:A1}  and \eqref{eq:B3}, we obtain
\begin{equation*}
E_{\cA_1^{(A)}\cup\cB_3^{(A)}}=k\bbI_{\cA_1^{(A)}\cup\cB_3^{(A)}}.
\end{equation*}
By the symmetry of Fig.~\ref{fig:d_Ad_Bd_C}, we can obtain  $E=k\bbI$. Thus,  $E$  is trivial. 	
\end{proof}
\vspace{0.4cm}

\section{The proof of Theorem~\ref{thm:d1d2d3}}\label{Appendix:thmd1d2d3}
\begin{proof}
First, we need to show that $\cU$ is a UPB. For the same discussion as Proposition~\ref{pro:3d1d2}, we can obtain the matrices,
\begin{equation*}
	M^{(d_A)}=\left[\begin{array}{ccccccccccccccccc}
	a_0 &a_0 &\cdots &a_0 &b_0 &\cdots &b_0 &c_0 &\cdots &c_0 &c_0 &\cdots &c_0 &c_0 &\cdots &c_0 &d_0\\
	a_0 &a_0 &\cdots &a_0 &b_0 &\cdots &b_0 & & & &b_1 &\cdots &b_1 &a_1&\cdots &a_1 &a_1\\
	\vdots &\vdots &\ddots  &\vdots&\vdots&\ddots&\vdots&&M^{(d_A-2)}&&\vdots&\ddots&\vdots&\vdots&\ddots&\vdots&\vdots\\
	a_0 &a_0 &\cdots &a_0 &b_0 &\cdots &b_0 & && &b_1 &\cdots &b_1 &a_1&\cdots &a_1 &a_1\\
	d_1 &c_1 &\cdots &c_1 &c_1 &\cdots &c_1 &c_1 &\cdots &c_1 &b_1 &\cdots &b_1 &a_1 &\ldots &a_1 &a_1
\end{array}	\right],
\end{equation*}
\begin{equation*}
M^{(d_A-2)}=\left[\begin{array}{ccccccccccccccccc}
a_0^{(1)} &a_0^{(1)} &\cdots &a_0^{(1)} &b_0^{(1)} &\cdots &b_0^{(1)} &c_0^{(1)} &\cdots &c_0^{(1)} &c_0^{(1)} &\cdots &c_0^{(1)} &c_0^{(1)} &\cdots &c_0^{(1)} &d_0^{(1)}\\
a_0^{(1)} &a_0^{(1)} &\cdots &a_0^{(1)} &b_0^{(1)} &\cdots &b_0^{(1)} & & & &b_1^{(1)} &\cdots &b_1^{(1)} &a_1^{(1)}&\cdots &a_1^{(1)} &a_1^{(1)}\\
\vdots &\vdots &\ddots  &\vdots&\vdots&\ddots&\vdots&&M^{(d_A-4)}&&\vdots&\ddots&\vdots&\vdots&\ddots&\vdots&\vdots\\
a_0^{(1)} &a_0^{(1)} &\cdots &a_0^{(1)} &b_0^{(1)} &\cdots &b_0^{(1)} & && &b_1^{(1)} & &b_1^{(1)} &a_1^{(1)}&\cdots &a_1^{(1)} &a_1^{(1)}\\
d_1^{(1)} &c_1^{(1)} &\cdots &c_1^{(1)} &c_1^{(1)} &\cdots &c_1^{(1)} &c_1^{(1)} &\cdots &c_1^{(1)} &b_1^{(1)} &\cdots &b_1^{(1)} &a_1^{(1)} &\ldots &a_1^{(1)} &a_1^{(1)}
\end{array}	\right],
\end{equation*}
\begin{equation*}
\vdots
\end{equation*}
\begin{equation*}
M^{(d_A-2s)}=\left[\begin{array}{ccccccccccccccccc}
a_0^{(s)} &a_0^{(s)} &\cdots &a_0^{(s)} &b_0^{(s)} &\cdots &b_0^{(s)} &c_0^{(s)} &\cdots &c_0^{(s)} &c_0^{(s)} &\cdots &c_0^{(s)} &c_0^{(s)} &\cdots &c_0^{(s)} &d_0^{(s)}\\
a_0^{(s)} &a_0^{(s)} &\cdots &a_0^{(s)} &b_0^{(s)} &\cdots &b_0^{(s)} &e^{(s)} &\cdots &e^{(s)} &b_1^{(s)} &\cdots &b_1^{(s)} &a_1^{(s)}&\cdots &a_1^{(s)} &a_1^{(s)}\\
\vdots &\vdots &\ddots  &\vdots&\vdots&\ddots&\vdots&\vdots&\ddots&\vdots&\vdots&\ddots&\vdots&\vdots&\ddots&\vdots&\vdots\\
a_0^{(s)} &a_0^{(s)} &\cdots &a_0^{(s)} &b_0^{(s)} &\cdots &b_0^{(s)} &e^{(s)} &\cdots&e^{(s)} &b_1^{(s)} & &b_1^{(s)} &a_1^{(s)}&\cdots &a_1^{(s)} &a_1^{(s)}\\
d_1^{(s)} &c_1^{(s)} &\cdots &c_1^{(s)} &c_1^{(s)} &\cdots &c_1^{(s)} &c_1^{(s)} &\cdots &c_1^{(s)} &b_1^{(s)} &\cdots &b_1^{(s)} &a_1^{(s)} &\ldots &a_1^{(s)} &a_1^{(s)}
\end{array}	\right],
\end{equation*}
where $M^{(d_A-2s)}$ is similar to $M$ of Eq.~\eqref{eq:M} in Proposition~\ref{pro:3d1d2}.  Next, we have

\begin{equation}\label{eq:rankd_1d_2d3}
\rank(M^{(d_A)})=1, \quad	\text{sum}(M^{(d_A)})=0.
\end{equation}
Then we consider $AB|C$ bipartition. We can rearrange  the first row of $M^{(d_A)}$ to the $d_B\times d_C$ matrix $M_{(d_A-1)}$  through $(AB,C)$ coordinates of $M^{(d_A)}$, and rearrange  the last row of $M^{(d_A)}$ to the $d_B\times d_C$ matrix $M_{0}$  through $(AB,C)$ coordinates of $M^{(d_A)}$, where
\begin{equation}
M_{(d_A-1)}=\begin{bmatrix}\label{eq:M_2d_1d_2d_3}
c_0 &c_0 &\cdots &c_0 &d_0\\
c_0 &c_0 &\cdots &c_0 &b_0\\
\vdots 	&\vdots  	&\ddots  	&\vdots 	&\vdots\\
c_0 &c_0 &\cdots &c_0 &b_0\\
a_0 &a_0 &\cdots &a_0 &b_0\\
\end{bmatrix}, \quad \rank(M_{(d_A-1)})=0\ \ \text{or} \ \  1,
\end{equation}
and
\begin{equation}
M_0=\begin{bmatrix}\label{eq:M_03_1d_2d_3}
b_1 &a_1 &\cdots &a_1 &a_1\\
b_1 &c_1 &\cdots &c_1 &c_1\\
\vdots 	&\vdots  	&\ddots  	&\vdots 	&\vdots\\
b_1 &c_1 &\cdots &c_1 &c_1\\
d_1 &c_1 &\cdots &c_1 &c_1\\
\end{bmatrix}, \quad \rank(M_0)=0\ \ \text{or} \ \  1.
\end{equation}
For the same proof of Example~\ref{example:334}, we can show that $a_0=a_1=b_0=b_1=c_0=c_1=d_0=d_1=0$ by Eqs.~\eqref{eq:rankd_1d_2d3}, \eqref{eq:M_2d_1d_2d_3}, and \eqref{eq:M_03_1d_2d_3}.
Then we obtain that
\begin{equation}\label{eq:rankd_1d_2d3^{d_A-2}}
\rank(M^{(d_A-2)})=1, \quad	\text{sum}(M^{(d_A-2)})=0.
\end{equation}
 We can rearrange  the second row of $M^{(d_A)}$ to the $d_B\times d_C$ matrix $M_{(d_A-2)}$  through $(AB,C)$ coordinates of $M^{(d_A)}$, and rearrange the last but two row of  $M^{(d_A)}$  to the $d_B\times d_C$ matrix $M_{1}$  through $(AB,C)$ coordinates of $M^{(d_A)}$, where
\begin{equation}
M_{(d_A-2)}=\begin{bmatrix}\label{eq:M_Ad_1d_2d_3}
0 &0   &0 & \cdots &0 &0 &0\\
0 &c_0^{(1)} &c_0^{(1)} &\cdots &c_0^{(1)} &d_0^{(1)} &0\\
0 &c_0^{(1)} &c_0^{(1)} &\cdots &c_0^{(1)} &b_0^{(1)} &0\\
\vdots&\vdots 	&\vdots  	&\ddots  	&\vdots 	&\vdots &\vdots\\
0 &c_0^{(1)} &c_0^{(1)} &\cdots &c_0^{(1)} &b_0^{(1)} &0\\
0 &a_0^{(1)} &a_0^{(1)} &\cdots &a_0^{(1)} &b_0^{(1)} &0\\
0 &0   &0   &\cdots &0   &0 &0
\end{bmatrix}, \quad \rank(M_{d_A-2})=0\ \ \text{or} \ \  1,
\end{equation}
and
\begin{equation}
M_1=\begin{bmatrix}\label{eq:M_A3_1d_2d_3}
0 &0   &0 & \cdots &0 &0 &0 \\
0&b_0^{(1)} &a_0^{(1)} &\cdots &a_0^{(1)} &a_0^{(1)} &0\\
0&b_0^{(1)} &c_0^{(1)} &\cdots &c_0^{(1)} &c_0^{(1)} &0\\
\vdots&\vdots 	&\vdots  	&\ddots  	&\vdots 	&\vdots &\vdots\\
0&b_0^{(1)} &c_0^{(1)} &\cdots &c_0^{(1)} &c_0^{(1)} &0\\
0&d_0^{(1)} &c_0^{(1)} &\cdots &c_0^{(1)} &c_0^{(1)} &0\\
0 &0   &0   &\cdots &0   &0 &0
\end{bmatrix}, \quad \rank(M_1)=0\ \ \text{or} \ \  1.
\end{equation}
Similarly, we can also show that $a_0^{(1)}=a_1^{(1)}=b_0^{(1)}=b_1^{(1)}=c_0^{(1)}=c_1^{(1)}=d_0^{(1)}=d_1^{(1)}=0$ by Eqs.~\eqref{eq:rankd_1d_2d3^{d_A-2}}, \eqref{eq:M_Ad_1d_2d_3}, and \eqref{eq:M_A3_1d_2d_3}.
Repeating this process $s$ times, then we obtain that
\begin{equation}\label{eq:rankd_1d_2d3^{d_A-s}}
\rank(M^{(d_A-2s)})=1, \quad	\text{sum}(M^{(d_A-2s)})=0.
\end{equation}
Since $M^{(d_A-2s)}$ is similar to $M$ of Eq.~\eqref{eq:M} in Proposition~\ref{pro:3d1d2}, and we can show that $a_0^{(s)}=a_1^{(s)}=b_0^{(s)}=b_1^{(s)}=c_0^{(s)}=c_1^{(s)}=d_0^{(s)}=d_1^{(s)}=e^{(s)}=0$ by the proof of Proposition~\ref{pro:3d1d2}. Thus we obtain that $M^{(d_A)}$ is a zero matrix, and it is impossible for $\rank(M^{(d_A)}) =1$.

We can obtain that  $\cU$ is strongly nonlocal by induction on $t$ along with Proposition~\ref{pro:dAdBdCofthe}.
\end{proof}
\vspace{0.4cm}

\section{The proof of Proposition~\ref{pro:dis}}\label{Appendix:prodis}
\begin{proof}
      We denote $\ket{\Phi(2)}_{A,B}$ as $\ket{\Phi(2)}_{a,b_1}$, and $\ket{\Phi(2)}_{B,C}$ as  $\ket{\Phi(2)}_{b_2,c}$. Assume that $\ket{\Phi(2)}_{a,b_1}$ is distributed between Alice and Bob, and $\ket{\Phi(2)}_{b_2,c}$ is distributed between Bob and Charlie. The initial states are
	\begin{equation}\label{eq:states}
	\ket{\psi}_{A,B,C}(\ket{0}_a\ket{0}_{b_1}+\ket{1}_a\ket{1}_{b_1})(\ket{0}_{b_2}\ket{0}_c+\ket{1}_{b_2}\ket{1}_c),
	\end{equation}
	for $\ket{\psi}_{A,B,C}\in\cup_{i=1}^3(\cA_i\cup\cB_i)\cup\cF\cup \{\ket{S}\}$,
	where $a$ is the ancillary system of Alice, $b_1$ and $b_2$ are the ancillary systems of Bob, and $c$ is the ancillary system of Charlie. Denote  $P[\ket{i}_{\spadesuit}]:=\ketbra{i}{i}_\spadesuit$,
	$P[(\ket{i_1},\ket{i_2},\ldots,i_r)_\spadesuit;(\ket{j_1},\ket{j_2},\ldots,\ket{j_s})_\clubsuit;(\ket{k_1},\ket{k_2},\ldots,\ket{k_t})_\diamondsuit]:=(\ketbra{i_1}{i_1}+\ketbra{i_2}{i_2}+\cdots+\ketbra{i_r}{i_r})_\spadesuit\otimes (\ketbra{j_1}{j_1}+\ketbra{j_2}{j_2}+\cdots+\ketbra{j_s}{j_s})_\clubsuit\otimes(\ketbra{k_1}{k_1}+\ketbra{k_2}{k_2}+\cdots+\ketbra{k_t}{k_t})_\diamondsuit$.
    Now the discrimination protocol proceeds as follows.
	
	\textit{Step 1}.  Alice performs the measurement $\{M_1:=P[(\ket{0},\ket{1})_A;\ket{0}_a]+P[\ket{2}_A;\ket{1}_a], \overline{M_1}:=\bbI-M_1\}$.  Charlie performs the measurement $\{L_1:=P[(\ket{1},\ket{2},\ket{3})_C;\ket{0}_c]+P[\ket{0}_C;\ket{1}_c], \overline{L_1}:=\bbI-L_1\}$.  If $M_1$ and $L_1$ clicks (it means that the operators $M_1$ and $L_1$ act on the states in Eq.~\eqref{eq:states}), the resulting postmeasurement states are	
	\begin{equation}\label{eq:post}
	\begin{aligned}
	\ket{\psi_{1}(0,1)}\rightarrow&\ket{1}_A\ket{0}_B\ket{0}_C\ket{0}_a\ket{0}_{b_1}\ket{1}_{b_2}\ket{1}_c+\ket{2}_A\ket{0}_B\ket{0}_C\ket{1}_a\ket{1}_{b_1}\ket{1}_{b_2}\ket{1}_c\\
	&+\ket{1}_A\ket{0}_B(w_3\ket{1}+w_3^2\ket{2})_C\ket{0}_a\ket{0}_{b_1}\ket{0}_{b_2}\ket{0}_c+\ket{2}_A\ket{0}_B(w_3\ket{1}+w_3^2\ket{2})_C\ket{1}_a\ket{1}_{b_1}\ket{0}_{b_2}\ket{0}_c,\\
	\ket{\psi_{1}(0,2)}\rightarrow&\ket{1}_A\ket{0}_B\ket{0}_C\ket{0}_a\ket{0}_{b_1}\ket{1}_{b_2}\ket{1}_c+\ket{2}_A\ket{0}_B\ket{0}_C\ket{1}_a\ket{1}_{b_1}\ket{1}_{b_2}\ket{1}_c\\
	&+\ket{1}_A\ket{0}_B(w_3^2\ket{1}+w_3\ket{2})_C\ket{0}_a\ket{0}_{b_1}\ket{0}_{b_2}\ket{0}_c+\ket{2}_A\ket{0}_B(w_3^2\ket{1}+w_3\ket{2})_C\ket{1}_a\ket{1}_{b_1}\ket{0}_{b_2}\ket{0}_c,\\
	\ket{\psi_{1}(1,0)}\rightarrow&\ket{1}_A\ket{0}_B\ket{0}_C\ket{0}_a\ket{0}_{b_1}\ket{1}_{b_2}\ket{1}_c-\ket{2}_A\ket{0}_B\ket{0}_C\ket{1}_a\ket{1}_{b_1}\ket{1}_{b_2}\ket{1}_c\\
    &+\ket{1}_A\ket{0}_B(\ket{1}+\ket{2})_C\ket{0}_a\ket{0}_{b_1}\ket{0}_{b_2}\ket{0}_c-\ket{2}_A\ket{0}_B(\ket{1}+\ket{2})_C\ket{1}_a\ket{1}_{b_1}\ket{0}_{b_2}\ket{0}_c,\\
	\ket{\psi_{1}(1,1)}\rightarrow&\ket{1}_A\ket{0}_B\ket{0}_C\ket{0}_a\ket{0}_{b_1}\ket{1}_{b_2}\ket{1}_c-\ket{2}_A\ket{0}_B\ket{0}_C\ket{1}_a\ket{1}_{b_1}\ket{1}_{b_2}\ket{1}_c\\
    &+\ket{1}_A\ket{0}_B(w_3\ket{1}+w_3^2\ket{2})_C\ket{0}_a\ket{0}_{b_1}\ket{0}_{b_2}\ket{0}_c-\ket{2}_A\ket{0}_B(w_3\ket{1}+w_3^2\ket{2})_C\ket{1}_a\ket{1}_{b_1}\ket{0}_{b_2}\ket{0}_c,\\
	\ket{\psi_{1}(1,2)}\rightarrow&\ket{1}_A\ket{0}_B\ket{0}_C\ket{0}_a\ket{0}_{b_1}\ket{1}_{b_2}\ket{1}_c-\ket{2}_A\ket{0}_B\ket{0}_C\ket{1}_a\ket{1}_{b_1}\ket{1}_{b_2}\ket{1}_c\\
    &+\ket{1}_A\ket{0}_B(w_3^2\ket{1}+w_3\ket{2})_C\ket{0}_a\ket{0}_{b_1}\ket{0}_{b_2}\ket{0}_c-\ket{2}_A\ket{0}_B(w_3^2\ket{1}+w_3\ket{2})_C\ket{1}_a\ket{1}_{b_1}\ket{0}_{b_2}\ket{0}_c,\\
	\ket{\psi_{2}(0,1)}\rightarrow&\ket{1}_A(\ket{0}-\ket{1})_B\ket{3}_C\ket{0}_a\ket{0}_{b_1}\ket{0}_{b_2}\ket{0}_c+\ket{2}_A(\ket{0}-\ket{1})_B\ket{3}_C\ket{1}_a\ket{1}_{b_1}\ket{0}_{b_2}\ket{0}_c,\\
	\ket{\psi_{2}(1,0)}\rightarrow&\ket{1}_A(\ket{0}+\ket{1})_B\ket{3}_C\ket{0}_a\ket{0}_{b_1}\ket{0}_{b_2}\ket{0}_c-\ket{2}_A(\ket{0}+\ket{1})_B\ket{3}_C\ket{1}_a\ket{1}_{b_1}\ket{0}_{b_2}\ket{0}_c,\\
	\ket{\psi_{2}(1,1)}\rightarrow&\ket{1}_A(\ket{0}-\ket{1})_B\ket{3}_C\ket{0}_a\ket{0}_{b_1}\ket{0}_{b_2}\ket{0}_c-\ket{2}_A(\ket{0}-\ket{1})_B\ket{3}_C\ket{1}_a\ket{1}_{b_1}\ket{0}_{b_2}\ket{0}_c,\\
	\ket{\psi_{3}(0,1)}\rightarrow&\ket{2}_A(\ket{1}+\ket{2})_B\ket{0}_C\ket{1}_a\ket{1}_{b_1}\ket{1}_{b_2}\ket{1}_c+\ket{2}_A(\ket{1}+\ket{2})_B(w_3\ket{1}+w_3^2\ket{2})_C\ket{1}_a\ket{1}_{b_1}\ket{0}_{b_2}\ket{0}_c,\\
	\ket{\psi_{3}(0,2)}\rightarrow&\ket{2}_A(\ket{1}+\ket{2})_B\ket{0}_C\ket{1}_a\ket{1}_{b_1}\ket{1}_{b_2}\ket{1}_c+\ket{2}_A(\ket{1}+\ket{2})_B(w_3^2\ket{1}+w_3\ket{2})_C\ket{1}_a\ket{1}_{b_1}\ket{0}_{b_2}\ket{0}_c,\\
	\ket{\psi_{3}(1,0)}\rightarrow&\ket{2}_A(\ket{1}-\ket{2})_B\ket{0}_C\ket{1}_a\ket{1}_{b_1}\ket{1}_{b_2}\ket{1}_c+\ket{2}_A(\ket{1}-\ket{2})_B(\ket{1}+\ket{2})_C\ket{1}_a\ket{1}_{b_1}\ket{0}_{b_2}\ket{0}_c,\\
	\ket{\psi_{3}(1,1)}\rightarrow&\ket{2}_A(\ket{1}-\ket{2})_B\ket{0}_C\ket{1}_a\ket{1}_{b_1}\ket{1}_{b_2}\ket{1}_c+\ket{2}_A(\ket{1}-\ket{2})_B(w_3\ket{1}+w_3^2\ket{2})_C\ket{1}_a\ket{1}_{b_1}\ket{0}_{b_2}\ket{0}_c,
	\end{aligned}
	\end{equation}
	\begin{equation*}
	\begin{aligned}
	\ket{\psi_{3}(1,2)}\rightarrow&\ket{2}_A(\ket{1}-\ket{2})_B\ket{0}_C\ket{1}_a\ket{1}_{b_1}\ket{1}_{b_2}\ket{1}_c+\ket{2}_A(\ket{1}-\ket{2})_B(w_3^2\ket{1}+w_3\ket{2})_C\ket{1}_a\ket{1}_{b_1}\ket{0}_{b_2}\ket{0}_c,\\
	\ket{\phi_{1}(0,1)}\rightarrow&(\ket{0}+\ket{1})_A\ket{2}_B(\ket{1}+w_3\ket{2}+w_3^2\ket{3})_C\ket{0}_a\ket{0}_{b_1}\ket{0}_{b_2}\ket{0}_c,\\
	\ket{\phi_{1}(0,2)}\rightarrow&(\ket{0}+\ket{1})_A\ket{2}_B(\ket{1}+w_3^2\ket{2}+w_3\ket{3})_C\ket{0}_a\ket{0}_{b_1}\ket{0}_{b_2}\ket{0}_c,\\
	\ket{\phi_{1}(1,0)}\rightarrow&(\ket{0}-\ket{1})_A\ket{2}_B(\ket{1}+\ket{2}+\ket{3})_C\ket{0}_a\ket{0}_{b_1}\ket{0}_{b_2}\ket{0}_c,\\
	\ket{\phi_{1}(1,1)}\rightarrow&(\ket{0}-\ket{1})_A\ket{2}_B(\ket{1}+w_3\ket{2}+w_3^2\ket{3})_C\ket{0}_a\ket{0}_{b_1}\ket{0}_{b_2}\ket{0}_c,\\
	\ket{\phi_{1}(1,2)}\rightarrow&(\ket{0}-\ket{1})_A\ket{2}_B(\ket{1}+w_3^2\ket{2}+w_3\ket{3})_C\ket{0}_a\ket{0}_{b_1}\ket{0}_{b_2}\ket{0}_c,\\
	\ket{\phi_{2}(0,1)}\rightarrow&(\ket{0}+\ket{1})_A(\ket{1}-\ket{2})_B\ket{0}_C\ket{0}_a\ket{0}_{b_1}\ket{1}_{b_2}\ket{1}_c,\\
	\ket{\phi_{2}(1,0)}\rightarrow&(\ket{0}-\ket{1})_A(\ket{1}+\ket{2})_B\ket{0}_C\ket{0}_a\ket{0}_{b_1}\ket{1}_{b_2}\ket{1}_c,\\
	\ket{\phi_{2}(1,1)}\rightarrow&(\ket{0}-\ket{1})_A(\ket{1}-\ket{2})_B\ket{0}_C\ket{0}_a\ket{0}_{b_1}\ket{1}_{b_2}\ket{1}_c,\\
	\ket{\phi_3(0,1)}\rightarrow&\ket{0}_A(\ket{0}+\ket{1})_B(\ket{1}+w_3\ket{2}+w_3^2\ket{3})_C\ket{0}_a\ket{0}_{b_1}\ket{0}_{b_2}\ket{0}_c,\\
	\ket{\phi_3(0,2)}\rightarrow&\ket{0}_A(\ket{0}+\ket{1})_B(\ket{1}+w_3^2\ket{2}+w_3\ket{3})_C\ket{0}_a\ket{0}_{b_1}\ket{0}_{b_2}\ket{0}_c,\\
	\ket{\phi_3(1,0)}\rightarrow&\ket{0}_A(\ket{0}-\ket{1})_B(\ket{1}+\ket{2}+\ket{3})_C\ket{0}_a\ket{0}_{b_1}\ket{0}_{b_2}\ket{0}_c,\\
	\ket{\phi_3(1,1)}\rightarrow&\ket{0}_A(\ket{0}-\ket{1})_B(\ket{1}+w_3\ket{2}+w_3^2\ket{3})_C\ket{0}_a\ket{0}_{b_1}\ket{0}_{b_2}\ket{0}_c,\\
	\ket{\phi_3(1,2)}\rightarrow&\ket{0}_A(\ket{0}-\ket{1})_B(\ket{1}+w_3^2\ket{2}+w_3\ket{3})_C\ket{0}_a\ket{0}_{b_1}\ket{0}_{b_2}\ket{0}_c,\\
	\ket{\varphi(1)}\rightarrow&\ket{1}_A\ket{1}_B(\ket{1}-\ket{2})_C\ket{0}_a\ket{0}_{b_1}\ket{0}_{b_2}\ket{0}_c,\\
	\ket{S}\rightarrow&(\ket{0}+\ket{1})_A(\ket{0}+\ket{1}+\ket{2})_B\ket{0}_C\ket{0}_a\ket{0}_{b_1}\ket{1}_{b_2}\ket{1}_c+\ket{2}_A(\ket{0}+\ket{1}+\ket{2})_B\ket{0}_C\ket{1}_a\ket{1}_{b_1}\ket{1}_{b_2}\ket{1}_c+\\
	&(\ket{0}+\ket{1})_A(\ket{0}+\ket{1}+\ket{2})_B(\ket{1}+\ket{2}+\ket{3})_C\ket{0}_a\ket{0}_{b_1}\ket{0}_{b_2}\ket{0}_c\\+&\ket{2}_A(\ket{0}+\ket{1}+\ket{2})_B(\ket{1}+\ket{2}+\ket{3})_C\ket{1}_a\ket{1}_{b_1}\ket{0}_{b_2}\ket{0}_c.
	\end{aligned}
	\end{equation*}	
	
	\textit{Step 2}. Bob performs the measurement $\{M_{2,1}:=P[\ket{2}_B;\ket{0}_{b_1};\ket{0}_{b_2}], M_{2,2}:=P[(\ket{1}-\ket{2})_B;\ket{0}_{b_1};\ket{1}_{b_2}],M_{2,3}:=P[(\ket{1}+\ket{2})_B;\ket{0}_{b_1};\ket{1}_{b_2}], \overline{M_2}:=\bbI-\sum_{j=1}^3M_{2,j}\}$. If $M_{2,1}$ clicks (it means that the operator $M_{2,1}$ acts on the states in Eq.~\eqref{eq:post}), then the postmeasurement states are $\{\ket{\phi_1(i,j)}\}_{(i,j)\neq(0,0)\in\bbZ_2\times\bbZ_3}$, and $\ket{S}\rightarrow (\ket{0}+\ket{1})_A\ket{2}_B(\ket{1}+\ket{2}+\ket{3})_C\ket{0}_a\ket{0}_{b_1}\ket{0}_{b_2}\ket{0}_c$. Then Alice performs the measurement  $\{M_{2,1,1}:=P[(\ket{0}-\ket{1})_A], \overline{M_{2,1,1}}=\bbI-M_{2,1,1}\}$. If $M_{2,1,1}$ clicks (it means that the operator $M_{2,1,1}$ acts on the states $\{\ket{\phi_1(i,j)}\}_{(i,j)\neq(0,0)\in\bbZ_2\times\bbZ_3}$ and $\ket{S}$), then the postmeasurement states are $\{\ket{\phi_{1}(1,j)}\}_{j\in\bbZ_3}$, which are locally distinguishable; if $\overline{M_{2,1,1}}$ clicks, then the postmeasurement states are $\{\ket{\phi_{1}(0,j)}\}_{j\neq 0\in\bbZ_3}$ and $\ket{S}$ that are locally distinguishable. Next, if $M_{2,2}$ clicks, then the postmeasurement states are $\{\ket{\phi_{2}(i,1)}\}_{i\in\bbZ_2}$ that are locally distinguishable; if $M_{2,3}$ clicks, then the postmeasurement states are $\ket{\phi_2(1,0)}$ and  $\ket{S}\rightarrow(\ket{0}+\ket{1})_A(\ket{1}+\ket{2})_B\ket{0}_C\ket{0}_a\ket{0}_{b_1}\ket{1}_{b_2}\ket{1}_c$  that are locally distinguishable;  if $\overline{M_2}$ clicks, then the postmeasurement states are $\{\ket{\psi_1(i,j)}\}_{(i,j)\neq(0,0)\in\bbZ_2\times\bbZ_3}$, $\{\ket{\psi_2(i,j)}\}_{(i,j)\neq(0,0)\in\bbZ_2\times\bbZ_2}$, $\{\ket{\psi_3(i,j)}\}_{(i,j)\neq(0,0)\in\bbZ_2\times\bbZ_3}$, $\{\ket{\phi_3(i,j)}\}_{(i,j)\neq(0,0)\in\bbZ_2\times\bbZ_3}$, $\ket{\varphi(1)}$ and $\ket{S}\rightarrow(\ket{0}+\ket{1})_A\ket{0}_B\ket{0}_C\ket{0}_a\ket{0}_{b_1}\ket{1}_{b_2}\ket{1}_c+\ket{2}_A(\ket{0}+\ket{1}+\ket{2})_B\ket{0}_C\ket{1}_a\ket{1}_{b_1}\ket{1}_{b_2}\ket{1}_c+
	(\ket{0}+\ket{1})_A(\ket{0}+\ket{1})_B(\ket{1}+\ket{2}+\ket{3})_C\ket{0}_a\ket{0}_{b_1}\ket{0}_{b_2}\ket{0}_c+\ket{2}_A(\ket{0}+\ket{1}+\ket{2})_B(\ket{1}+\ket{2}+\ket{3})_C\ket{1}_a\ket{1}_{b_1}\ket{0}_{b_2}\ket{0}_c$.

	\textit{Step 3}. Alice performs the measurement $\{M_3:=P[\ket{0}_A],\overline{M_3}=\bbI-M_3\}$. If $M_3$ clicks (it means that the operator $M_3$ acts on the states $\{\ket{\psi_1(i,j)}\}_{(i,j)\neq(0,0)\in\bbZ_2\times\bbZ_3}$, $\{\ket{\psi_2(i,j)}\}_{(i,j)\neq(0,0)\in\bbZ_2\times\bbZ_2}$, $\{\ket{\psi_3(i,j)}\}_{(i,j)\neq(0,0)\in\bbZ_2\times\bbZ_3}$, $\{\ket{\phi_3(i,j)}\}_{(i,j)\neq(0,0)\in\bbZ_2\times\bbZ_3}$, $\ket{\varphi(1)}$ and $\ket{S}$), then the postmeasurement states are $\{\ket{\phi_3(i,j)}\}_{(i,j)\neq(0,0)\in\bbZ_2\times\bbZ_3}$ and $\ket{S}\rightarrow \ket{0}_A\ket{0}_B\ket{0}_C\ket{0}_a\ket{0}_{b_1}\ket{1}_{b_2}\ket{1}_c+
	\ket{0}_A(\ket{0}+\ket{1})_B(\ket{1}+\ket{2}+\ket{3})_C\ket{0}_a\ket{0}_{b_1}\ket{0}_{b_2}\ket{0}_c$, which are locally distinguishable; if $\overline{M_3}$ clicks, then the postmeasurement states are  $\{\ket{\psi_1(i,j)}\}_{(i,j)\neq(0,0)\in\bbZ_2\times\bbZ_3}$, $\{\psi_2(i,j)\}_{(i,j)\neq(0,0)\in\bbZ_2\times\bbZ_2}$, $\{\ket{\psi_3(i,j)}\}_{(i,j)\neq(0,0)\in\bbZ_2\times\bbZ_3}$, $\ket{\varphi(1)}$, and $\ket{S}\rightarrow \ket{1}_A\ket{0}_B\ket{0}_C\ket{0}_a\ket{0}_{b_1}\ket{1}_{b_2}\ket{1}_c+\ket{2}_A(\ket{0}+\ket{1}+\ket{2})_B\ket{0}_C\ket{1}_a\ket{1}_{b_1}\ket{1}_{b_2}\ket{1}_c+
	\ket{1}_A(\ket{0}+\ket{1})_B(\ket{1}+\ket{2}+\ket{3})_C\ket{0}_a\ket{0}_{b_1}\ket{0}_{b_2}\ket{0}_c+\ket{2}_A(\ket{0}+\ket{1}+\ket{2})_B(\ket{1}+\ket{2}+\ket{3})_C\ket{1}_a\ket{1}_{b_1}\ket{0}_{b_2}\ket{0}_c$.
	
	\textit{Step 4}. Charlie performs the measurement $\{M_4:=P[\ket{3}_C],\overline{M_4}=\bbI-M_4\}$. If $M_4$ clicks (it means that the operator $M_4$ acts on the states $\{\ket{\psi_1(i,j)}\}_{(i,j)\neq(0,0)\in\bbZ_2\times\bbZ_3}$, $\{\ket{\psi_2(i,j)}\}_{(i,j)\neq(0,0)\in\bbZ_2\times\bbZ_2}$, $\{\ket{\psi_3(i,j)}\}_{(i,j)\neq(0,0)\in\bbZ_2\times\bbZ_3}$, $\ket{\varphi(1)}$ and $\ket{S}$), then then the postmeasurement states are $\{\psi_2(i,j)\}_{(i,j)\neq(0,0)\in\bbZ_2\times\bbZ_2}$, and  $\ket{S}\rightarrow
	\ket{1}_A(\ket{0}+\ket{1})_B\ket{3}_C\ket{0}_a\ket{0}_{b_1}\ket{0}_{b_2}\ket{0}_c+\ket{2}_A(\ket{0}+\ket{1}+\ket{2})_B\ket{3}_C\ket{1}_a\ket{1}_{b_1}\ket{0}_{b_2}\ket{0}_c$, which are locally distinguishable; if $\overline{M_4}$ clicks, then the postmeasurement states are  $\{\ket{\psi_1(i,j)}\}_{(i,j)\neq(0,0)\in\bbZ_2\times\bbZ_3}$, $\{\ket{\psi_3(i,j)}\}_{(i,j)\neq(0,0)\in\bbZ_2\times\bbZ_3}$, $\ket{\varphi(1)}$ and $\ket{S}\rightarrow \ket{1}_A\ket{0}_B\ket{0}_C\ket{0}_a\ket{0}_{b_1}\ket{1}_{b_2}\ket{1}_c+\ket{2}_A(\ket{0}+\ket{1}+\ket{2})_B\ket{0}_C\ket{1}_a\ket{1}_{b_1}\ket{1}_{b_2}\ket{1}_c+
	\ket{1}_A(\ket{0}+\ket{1})_B(\ket{1}+\ket{2})_C\ket{0}_a\ket{0}_{b_1}\ket{0}_{b_2}\ket{0}_c+\ket{2}_A(\ket{0}+\ket{1}+\ket{2})_B(\ket{1}+\ket{2})_C\ket{1}_a\ket{1}_{b_1}\ket{0}_{b_2}\ket{0}_c$.
	
	\textit{Step 5}. Bob performs the measurement $\{M_5:=P[\ket{0}_B],\overline{M_5}=\bbI-M_5\}$. If $M_5$ clicks (it means that the operator $M_5$ acts on the states $\{\ket{\psi_1(i,j)}\}_{(i,j)\neq(0,0)\in\bbZ_2\times\bbZ_3}$, $\{\ket{\psi_3(i,j)}\}_{(i,j)\neq(0,0)\in\bbZ_2\times\bbZ_3}$, $\ket{\varphi(1)}$ and $\ket{S}$), then the postmeasurement states are  $\{\ket{\psi_1(i,j)}\}_{(i,j)\neq(0,0)\in\bbZ_2\times\bbZ_3}$ and $\ket{S}\rightarrow \ket{1}_A\ket{0}_B\ket{0}_C\ket{0}_a\ket{0}_{b_1}\ket{1}_{b_2}\ket{1}_c+\ket{2}_A\ket{0}_B\ket{0}_C\ket{1}_a\ket{1}_{b_1}\ket{1}_{b_2}\ket{1}_c+
	\ket{1}_A\ket{0}_B(\ket{1}+\ket{2})_C\ket{0}_a\ket{0}_{b_1}\ket{0}_{b_2}\ket{0}_c+\ket{2}_A\ket{0}_B(\ket{1}+\ket{2})_C\ket{1}_a\ket{1}_{b_1}\ket{0}_{b_2}\ket{0}_c$. Then, Alice performs the measurement $\{M_{5,1}:=P[(\ket{0}+\ket{1})_a],\overline{M_{5,1}}=P[(\ket{0}-\ket{1})_a]\}$. Bob performs the measurement $\{M_{5,2}:=P[(\ket{0}+\ket{1})_{b_1}],\overline{M_{5,2}}=P[(\ket{0}-\ket{1})_{b_1}]\}$. If $M_{5,1}$ and $M_{5,2}$ clicks (It means that the operators $M_{5,1}$ and $M_{5,2}$ act on the states $\{\ket{\psi_1(i,j)}\}_{(i,j)\neq(0,0)\in\bbZ_2\times\bbZ_3}$ and $\ket{S}$ ), then the postmeasurement states are
	\begin{equation*}
	\begin{aligned}
	\ket{\psi_{1}(0,1)}\rightarrow&(\ket{1}+\ket{2})_A\ket{0}_B\ket{0}_C(\ket{0}+\ket{1})_a(\ket{0}+\ket{1})_{b_1}\ket{1}_{b_2}\ket{1}_c\\
	&+(\ket{1}+\ket{2})_A\ket{0}_B(w_3\ket{1}+w_3^2\ket{2})_C(\ket{0}+\ket{1})_a(\ket{0}+\ket{1})_{b_1}\ket{0}_{b_2}\ket{0}_c,\\
	\ket{\psi_{1}(0,2)}\rightarrow&(\ket{1}+\ket{2})_A\ket{0}_B\ket{0}_C(\ket{0}+\ket{1})_a(\ket{0}+\ket{1})_{b_1}\ket{1}_{b_2}\ket{1}_c\\
    &+(\ket{1}+\ket{2})_A\ket{0}_B(w_3^2\ket{1}+w_3\ket{2})_C(\ket{0}+\ket{1})_a(\ket{0}+\ket{1})_{b_1}\ket{0}_{b_2}\ket{0}_c,\\
	\ket{\psi_{1}(1,0)}\rightarrow&(\ket{1}-\ket{2})_A\ket{0}_B\ket{0}_C(\ket{0}+\ket{1})_a(\ket{0}+\ket{1})_{b_1}\ket{1}_{b_2}\ket{1}_c\\
    &+(\ket{1}-\ket{2})_A\ket{0}_B(\ket{1}+\ket{2})_C(\ket{0}+\ket{1})_a(\ket{0}+\ket{1})_{b_1}\ket{0}_{b_2}\ket{0}_c,\\
   	\ket{\psi_{1}(1,1)}\rightarrow&(\ket{1}-\ket{2})_A\ket{0}_B\ket{0}_C(\ket{0}+\ket{1})_a(\ket{0}+\ket{1})_{b_1}\ket{1}_{b_2}\ket{1}_c\\
    &+(\ket{1}-\ket{2})_A\ket{0}_B(w_3\ket{1}+w_3^2\ket{2})_C(\ket{0}+\ket{1})_a(\ket{0}+\ket{1})_{b_1}\ket{0}_{b_2}\ket{0}_c,\\
   	\ket{\psi_{1}(1,2)}\rightarrow&(\ket{1}-\ket{2})_A\ket{0}_B\ket{0}_C(\ket{0}+\ket{1})_a(\ket{0}+\ket{1})_{b_1}\ket{1}_{b_2}\ket{1}_c\\
    &+(\ket{1}-\ket{2})_A\ket{0}_B(w_3^2\ket{1}+w_3\ket{2})_C(\ket{0}+\ket{1})_a(\ket{0}+\ket{1})_{b_1}\ket{0}_{b_2}\ket{0}_c,\\
	\ket{S}\rightarrow&(\ket{1}+\ket{2})_A\ket{0}_B\ket{0}_C(\ket{0}+\ket{1})_a(\ket{0}+\ket{1})_{b_1}\ket{1}_{b_2}\ket{1}_c\\&+(\ket{1}+\ket{2})_A\ket{0}_B(\ket{1}+\ket{2})_C(\ket{0}+\ket{1})_a(\ket{0}+\ket{1})_{b_1}\ket{0}_{b_2}\ket{0}_c,\\
	\end{aligned}
	\end{equation*}
	which can be easily locally distinguished. All other cases obtain a similar protocol.  If $\overline{M_5}$ clicks, then the postmeasurement states are $\{\ket{\psi_3(i,j)}\}_{(i,j)\neq(0,0)\in\bbZ_2\times\bbZ_3}$, $\ket{\varphi(1)}$ and $\ket{S}\rightarrow \ket{2}_A(\ket{1}+\ket{2})_B\ket{0}_C\ket{1}_a\ket{1}_{b_1}\ket{1}_{b_2}\ket{1}_c+
	\ket{1}_A\ket{1}_B(\ket{1}+\ket{2})_C\ket{0}_a\ket{0}_{b_1}\ket{0}_{b_2}\ket{0}_c+\ket{2}_A(\ket{1}+\ket{2})_B(\ket{1}+\ket{2})_C\ket{1}_a\ket{1}_{b_1}\ket{0}_{b_2}\ket{0}_c$.
	
	\textit{Step 6}.  Alice performs the measurement $\{M_6:=P[\ket{0}_A],\overline{M_6}=\bbI-M_6\}$. If $M_6$ clicks (it means that the operator $M_6$ acts on the states $\{\ket{\psi_3(i,j)}\}_{(i,j)\neq(0,0)\in\bbZ_2\times\bbZ_3}$, $\ket{\varphi(1)}$ and $\ket{S}$), then the postmeasurement states are $\{\psi_3(i,j)\}_{(i,j)\neq(0,0)\in\bbZ_2\times\bbZ_3}$ and $\ket{S}$ that are locally distinguishable. If $\overline{M_6}$ clicks,  then the postmeasurement state is $\ket{\varphi(1)}$.
	
	All other cases in step 1 obtain a similar protocol. This completes the proof.
\end{proof}
\vspace{0.4cm}



\begin{thebibliography}{10}
	\providecommand{\url}[1]{#1}
	\csname url@samestyle\endcsname
	\providecommand{\newblock}{\relax}
	\providecommand{\bibinfo}[2]{#2}
	\providecommand{\BIBentrySTDinterwordspacing}{\spaceskip=0pt\relax}
	\providecommand{\BIBentryALTinterwordstretchfactor}{4}
	\providecommand{\BIBentryALTinterwordspacing}{\spaceskip=\fontdimen2\font plus
		\BIBentryALTinterwordstretchfactor\fontdimen3\font minus
		\fontdimen4\font\relax}
	\providecommand{\BIBforeignlanguage}[2]{{%
			\expandafter\ifx\csname l@#1\endcsname\relax
			\typeout{** WARNING: IEEEtran.bst: No hyphenation pattern has been}%
			\typeout{** loaded for the language `#1'. Using the pattern for}%
			\typeout{** the default language instead.}%
			\else
			\language=\csname l@#1\endcsname
			\fi
			#2}}
	\providecommand{\BIBdecl}{\relax}
	\BIBdecl
	
	\bibitem{computation2010}
	M.~A. Nielsen and I.~L. Chuang, \emph{Quantum Computation and Quantum
		Information}.\hskip 1em plus 0.5em minus 0.4em\relax Cambridge University
	Press, Cambridge, UK, 2004.
	
	\bibitem{terhal2001hiding}
	B.~M. Terhal, D.~P. Divincenzo, and D.~Leung, ``Hiding bits in bell states.''
	\emph{Phys. Rev. Lett.}, vol.~86, no.~25, pp. 5807--5810, 2001.
	
	\bibitem{divincenzo2002quantum}
	D.~P. Divincenzo, D.~Leung, and B.~M. Terhal, ``Quantum data hiding,''
	\emph{IEEE Trans. Inf. Theory}, vol.~48, no.~3, pp. 580--598, 2002.
	
	\bibitem{eggeling2002hiding}
	T.~Eggeling and R.~F. Werner, ``Hiding classical data in multipartite quantum
	states,'' \emph{Phys. Rev. Lett.}, vol.~89, no.~9, p. 097905, 2002.
	
	\bibitem{Matthews2009Distinguishability}
	W.~Matthews, S.~Wehner, and A.~Winter, ``Distinguishability of quantum states
	under restricted families of measurements with an application to quantum data
	hiding,'' \emph{Commun. Math. Phys.}, vol. 291, no.~3, pp. p.813--843, 2009.
	
	\bibitem{Markham2008Graph}
	D.~Markham and B.~C. Sanders, ``Graph states for quantum secret sharing,''
	\emph{Phys. Rev. A}, vol.~78, no.~4, pp. 144--144, 2008.
	
	\bibitem{Hillery}
	M.~Hillery, V.~Buzek, and A.~Berthiaume, ``Quantum secret sharing,''
	\emph{Phys. Rev. A}, vol.~59, p. 1829, 1999.
	
	\bibitem{Rahaman}
	R.~Rahaman and M.~G.~Parker, ``Quantum scheme for secret sharing based on local
	distinguishability,'' \emph{Phys. Rev. A}, vol.~91, p. 022330, 2015.
	
	\bibitem{bennett1999quantum}
	C.~H. Bennett, D.~P. Divincenzo, C.~A. Fuchs, T.~Mor, E.~M. Rains, P.~W. Shor,
	J.~A. Smolin, and W.~K. Wootters, ``Quantum nonlocality without
	entanglement,'' \emph{Phys. Rev. A}, vol.~59, no.~2, pp. 1070--1091, 1999.
	
	\bibitem{horodecki2009quantum}
	R.~Horodecki, P.~Horodecki, M.~Horodecki, and K.~Horodecki, ``Quantum
	entanglement,'' \emph{Reviews of modern physics}, vol.~81, no.~2, p. 865,
	2009.
	
	\bibitem{brunner2014bell}
	N.~Brunner, D.~Cavalcanti, S.~Pironio, V.~Scarani, and S.~Wehner, ``Bell
	nonlocality,'' \emph{Reviews of Modern Physics}, vol.~86, no.~2, p. 419,
	2014.
	
	\bibitem{1}
	D.~P. Divincenzo, T.~Mor, P.~W. Shor, J.~A. Smolin, and B.~M. Terhal,
	``Unextendible product bases, uncompletable product bases and bound
	entanglement,'' \emph{Commun. Math. Phys.}, vol. 238, no.~3, pp. 379--410,
	2003.
	
	\bibitem{2}
	Y.~Feng and Y.~Shi, ``Characterizing locally indistinguishable orthogonal
	product states,'' \emph{IEEE Trans. Inf. Theory}, vol.~55, no.~6, pp.
	p.2799--2806, 2009.
	
	\bibitem{3}
	J.~Niset and N.~J. Cerf, ``Multipartite nonlocality without entanglement in
	many dimensions,'' \emph{Phys. Rev. A}, vol.~74, p. 052103, 2006.
	
	\bibitem{4}
	Y.~Yang, F.~Gao, G.~Tian, T.~Cao, and Q.~Wen, ``Local distinguishability of
	orthogonal quantum states in a 2$\otimes$2$\otimes$2 system,'' \emph{Phys.
		Rev. A}, vol.~88, no.~2, p. 024301, 2013.
	
	\bibitem{5}
	S.~Halder, ``Several nonlocal sets of multipartite pure orthogonal product
	states,'' \emph{Phys. Rev. A}, vol.~98, p. 022303, 2018.
	
	\bibitem{6}
	G.~Xu, Q.~Wen, F.~Gao, S.~Qin, and H.~Zuo, ``Local indistinguishability of
	multipartite orthogonal product bases,'' \emph{Quantum Inf. Process.},
	vol.~16, no.~11, p. 276, 2017.
	
	\bibitem{7}
	Y.-L. Wang, M.-S. Li, Z.-J. Zheng, and S.-M. Fei, ``The local
	indistinguishability of multipartite product states,'' \emph{Quantum Inf.
		Process.}, vol.~16, no.~1, pp. 1--13, 2017.
	
	\bibitem{8}
	Z.-C. Zhang, K.-J. Zhang, F.~Gao, Q.-Y. Wen, and C.~H. Oh, ``Construction of
	nonlocal multipartite quantum states,'' \emph{Phys. Rev. A}, vol.~95, p.
	052344, 2017.
	
	\bibitem{9}
	S.~Ghosh, G.~Kar, A.~Roy, and D.~Sarkar, ``Distinguishability of maximally
	entangled states,'' \emph{Phys. Rev. A}, vol.~70, p. 022304, 2004.
	
	\bibitem{10}
	H.~Fan, ``Distinguishability and indistinguishability by local operations and
	classical communication,'' \emph{Phys. Rev. Lett.}, vol.~92, p. 177905, 2004.
	
	\bibitem{11}
	M.~Nathanson, ``Distinguishing bipartitite orthogonal states using locc: Best
	and worst cases,'' \emph{J. Math. Phys}, vol.~46, no.~6, p. 062103, 2005.
	
	\bibitem{12}
	N.~Yu, R.~Duan, and M.~Ying, ``Any $2\otimes n$ subspace is locally
	distinguishable,'' \emph{Phys. Rev. A}, vol.~84, p. 012304, 2011.
	
	\bibitem{13}
	R.~Duan, Y.~Feng, Z.~Ji, and M.~Ying, ``Distinguishing arbitrary multipartite
	basis unambiguously using local operations and classical communication,''
	\emph{Phys. Rev. Lett.}, vol.~98, p. 230502, 2007.
	
	\bibitem{14}
	S.~Bandyopadhyay, S.~Ghosh, and G.~Kar, ``Locc distinguishability of
	unilaterally transformable quantum states,'' \emph{New J. Phys.}, vol.~13,
	no.~12, p. 123013, 2011.
	
	\bibitem{15}
	A.~Cosentino, ``Positive-partial-transpose-indistinguishable states via
	semidefinite programming,'' \emph{Phys. Rev. A}, vol.~87, no.~1, 2013.
	
	\bibitem{16}
	N.~Yu, R.~Duan, and M.~Ying, ``Four locally indistinguishable ququad-ququad
	orthogonal maximally entangled states,'' \emph{Phys. Rev. Lett.}, vol. 109,
	no.~2, p. 020506, 2012.
	
	\bibitem{17}
	S.~Bandyopadhyay, ``Entanglement, mixedness, and perfect local discrimination
	of orthogonal quantum states,'' \emph{Phys. Rev. A}, vol.~85, p. 042319,
	2012.
	
	\bibitem{18}
	Y.-L. Wang, M.-S. Li, Z.-J. Zheng, and S.-M. Fei, ``Nonlocality of orthogonal
	product-basis quantum states,'' \emph{Phys. Rev. A}, vol.~92, p. 032313,
	2015.
	
	\bibitem{li1}
	M.-S. Li, Y.-L. Wang, F.~Shi, and M.-H. Yung, ``Local distinguishability based
	genuinely quantum nonlocality without entanglement,''
	\emph{arXiv:2011.10286}, 2020.
	
	\bibitem{bennett1999unextendible}
	C.~H. Bennett, D.~P. DiVincenzo, T.~Mor, P.~W. Shor, J.~A. Smolin, and B.~M.
	Terhal, ``Unextendible product bases and bound entanglement,'' \emph{Phys.
		Rev. Lett.}, vol.~82, no.~26, p. 5385, 1999.
	
	\bibitem{de2004distinguishability}
	S.~De~Rinaldis, ``Distinguishability of complete and unextendible product
	bases,'' \emph{Phys. Rev. A}, vol.~70, no.~2, p. 022309, 2004.
	
	\bibitem{halder2019family}
	S.~Halder, M.~Banik, and S.~Ghosh, ``Family of bound entangled states on the
	boundary of the {P}eres set,'' \emph{Physical Review A}, vol.~99, no.~6, p.
	062329, 2019.
	
	\bibitem{Tura2012four}
	J.~Tura, R.~Augusiak, P.~Hyllus, M.~Ku{\'s}, J.~Samsonowicz, and M.~Lewenstein,
	``Four-qubit entangled symmetric states with positive partial
	transpositions,'' \emph{Physical Review A}, vol.~85, no.~6, p. 060302, 2012.
	
	\bibitem{Chen2014Unextendible}
	J.~Chen, L.~Chen, and B.~Zeng, ``Unextendible product basis for fermionic
	systems,'' \emph{Journal of Mathematical Physics}, vol.~55, no.~8, 2014.
	
	\bibitem{Augusiak2012tight}
	R.~Augusiak, T.~Fritz, M.~Kotowski, M.~Kotowski, M.~Pawlowski, M.~Lewenstein,
	and A.~Acin, ``Tight bell inequalities with no quantum violation from qubit
	unextendible product bases,'' \emph{Physical Review A}, vol.~85, no.~4, p.
	042113, 2012.
	
	\bibitem{augusiak2011bell}
	R.~Augusiak, J.~Stasi{\'n}ska, C.~Hadley, J.~Korbicz, M.~Lewenstein, and
	A.~Acin, ``Bell inequalities with no quantum violation and unextendable
	product bases,'' \emph{Physical review letters}, vol. 107, no.~7, p. 070401,
	2011.
	
	\bibitem{alon2001unextendible}
	N.~Alon and L.~Lov{\'a}sz, ``Unextendible product bases,'' \emph{Journal of
		combinatorial theory. Series A}, vol.~95, no.~1, pp. 169--179, 2001.
	
	\bibitem{Fen06}
	K.~Feng, ``Unextendible product bases and $1$-factorization of complete
	graphs,'' \emph{Discrete Appl. Math.}, vol. 154, pp. 942--949, 2006.
	
	\bibitem{Joh13}
	N.~Johnston, ``The minimum size of qubit unextendible product bases,''
	\emph{arXiv preprint arXiv:1302.1604}, 2013.
	
	\bibitem{Chen2013The}
	J.~Chen and N.~Johnston, ``The minimum size of unextendible product bases in
	the bipartite case (and some multipartite cases),'' \emph{Communications in
		Mathematical Physics}, vol. 333, no.~1, pp. 351--365, 2013.
	
	\bibitem{Halder2019Strong}
	S.~Halder, M.~Banik, S.~Agrawal, and S.~Bandyopadhyay, ``Strong quantum
	nonlocality without entanglement,'' \emph{Phys. Rev. Lett.}, vol. 122, no.~4,
	p. 040403, 2019.
	
	\bibitem{yuan2020strong}
	P.~Yuan, G.~Tian, and X.~Sun, ``Strong quantum nonlocality without entanglement
	in multipartite quantum systems,'' \emph{Phys. Rev. A}, vol. 102, p. 042228,
	2020.
	
	\bibitem{shi20211}
	F.~Shi, M.-S. Li, M.~Hu, L.~Chen, M.-H. Yung, Y.-L. Wang, and X.~Zhang,
	``Strong quantum nonlocality from hypercubes,'' \emph{arXiv:2110.08461},
	2021.
	
	\bibitem{shi2021}
	------, ``Strongly nonlocal unextendible product bases do exist,''
	\emph{arXiv:2101.00735v2}, 2021.
	
	\bibitem{Agrawal2019Genuinely}
	S.~Agrawal, S.~Halder, and M.~Banik, ``Genuinely entangled subspace with
	all-encompassing distillable entanglement across every bipartition,''
	\emph{Phys. Rev. A}, vol.~99, no.~3, 2019.
	
	\bibitem{2020Strong}
	F.~Shi, M.~Hu, L.~Chen, and X.~Zhang, ``Strong quantum nonlocality with
	entanglement,'' \emph{Phys. Rev. A}, vol. 102, p. 042202, 2020.
	
	\bibitem{li2}
	Y.-L. Wang, M.-S. Li, and M.-H. Yung, ``Graph connectivity based strong quantum
	nonlocality with genuine entanglement,'' \emph{Physical Review A}, vol. 104,
	no.~1, p. 012424, 2021.
	
	\bibitem{cohen2008understanding}
	S.~M. Cohen, ``Understanding entanglement as resource: Locally distinguishing
	unextendible product bases,'' \emph{Phys. Rev. A}, vol.~77, no.~1, p. 012304,
	2008.
	
	\bibitem{ghosh2001distinguishability}
	S.~Ghosh, G.~Kar, A.~Roy, A.~Sen, U.~Sen \emph{et~al.}, ``Distinguishability of
	bell states,'' \emph{Phys. Rev. Lett.}, vol.~87, no.~27, p. 277902, 2001.
	
	\bibitem{bandyopadhyay2016entanglement}
	S.~Bandyopadhyay, S.~Halder, and M.~Nathanson, ``Entanglement as a resource for
	local state discrimination in multipartite systems,'' \emph{Phys. Rev. A},
	vol.~94, no.~2, p. 022311, 2016.
	
	\bibitem{zhang2016entanglement}
	Z.-C. Zhang, F.~Gao, T.-Q. Cao, S.-J. Qin, and Q.-Y. Wen, ``Entanglement as a
	resource to distinguish orthogonal product states,'' \emph{Sci. Rep.},
	vol.~6, no.~1, pp. 1--7, 2016.
	
	\bibitem{gungor2016entanglement}
	{\"O}.~G{\"u}ng{\"o}r and S.~Turgut, ``Entanglement-assisted state
	discrimination and entanglement preservation,'' \emph{Phys. Rev. A}, vol.~94,
	no.~3, p. 032330, 2016.
	
	\bibitem{zhang2018local}
	Z.-C. Zhang, Y.-Q. Song, T.-T. Song, F.~Gao, S.-J. Qin, and Q.-Y. Wen, ``Local
	distinguishability of orthogonal quantum states with multiple copies of
	2$\otimes$2 maximally entangled states,'' \emph{Phys. Rev. A}, vol.~97,
	no.~2, p. 022334, 2018.
	
	\bibitem{Sumit2019Genuinely}
	S.~Rout, A.~G. Maity, A.~Mukherjee, S.~Halder, and M.~Banik, ``Genuinely
	nonlocal product bases: Classification and entanglement-assisted
	discrimination,'' \emph{Phys. Rev. A}, vol. 100, no.~3, p. 032321, 2019.
	
	\bibitem{zhang2020locally}
	Z.-C. Zhang, X.~Wu, and X.~Zhang, ``Locally distinguishing unextendible product
	bases by using entanglement efficiently,'' \emph{Phys. Rev. A}, vol. 101,
	no.~2, p. 022306, 2020.
	
	\bibitem{Shi2020Unextendible}
	F.~Shi, X.~Zhang, and L.~Chen, ``Unextendible product bases from tile
	structures and their local entanglement-assisted distinguishability,''
	\emph{Phys. Rev. A}, vol. 101, no.~6, p. 062329, 2020.
	
	\bibitem{Bennett1993Teleporting}
	C.~H. Bennett, G.~Brassard, C.~Crépeau, R.~Jozsa, A.~Peres, and W.~K.
	Wootters, ``Teleporting an unknown quantum state via dual classical and
	einstein-podolsky-rosen channels,'' vol.~70, no.~13, 1993, p. 1895.
	
	\bibitem{Supplementary}
	\BIBentryALTinterwordspacing
	F.~Shi, M.-S. Li, L.~Chen, and X.~Zhang, ``Supplementary material: strongly
	nonlocal four-partite {UPBs},'' 2021. [Online]. Available:
	\url{http://staff.ustc.edu.cn/~drzhangx/papers/UPBnonunisupp.pdf}
	\BIBentrySTDinterwordspacing
	
	\bibitem{Bej}
	P.~Bej and S.~Halder, ``Unextendible product bases, bound entangled states, and
	the range criterion,'' \emph{Physics Letters A}, vol. 386, p. 126992, 2021.
	
	\bibitem{halder2019construction}
	S.~Halder and R.~Sengupta, ``Construction of noisy bound entangled states and
	the range criterion,'' \emph{Physics Letters A}, vol. 383, no.~17, pp.
	2004--2010, 2019.
	
\end{thebibliography}
\end{document}